\begin{document}

\title{Reducing Estimation Risk in Mean-Variance Portfolios with Machine Learning\footnote{I would like to thank Paul Ehling, Christian Brinch, Ragnar Juelsrud and Jo Saakvitne for valuable suggestions and comments.}}
\author{Daniel Kinn\footnote{Department of Economics, BI Norwegian Business School, Nydalsveien 37, N-0484 Oslo. E-mail: daniel.o.kinn@bi.no.}}
\date{July 2018}
\maketitle

\begin{abstract}
In portfolio analysis, the traditional approach of replacing population moments with sample counterparts may lead to suboptimal portfolio choices. I show that optimal portfolio weights can be estimated using a machine learning (ML) framework, where the outcome to be predicted is a constant and the vector of explanatory variables is the asset returns. It follows that ML specifically targets estimation risk when estimating portfolio weights, and that ``off-the-shelf'' ML algorithms can be used to estimate the optimal portfolio in the presence of parameter uncertainty. The framework nests the traditional approach and recently proposed shrinkage approaches as special cases. By relying on results from the ML literature, I derive new insights for existing approaches and propose new estimation methods. Based on simulation studies and several datasets, I find that ML significantly reduces estimation risk compared to both the traditional approach and the equal weight strategy.\\

\noindent \textbf{Keywords:} supervised machine learning, portfolio selection, estimation risk\\
\noindent \textbf{JEL codes:} G11, C52, C58
\end{abstract}

\section{Introduction}\label{sec_pf_intro}
In the modern portfolio theory framework developed by \cite{markowitz1952portfolio} the optimal portfolio is a function of the population mean and covariance matrix of asset returns. Given data on returns, the traditional approach is to estimate optimal portfolio weights by treating the sample mean and sample covariance matrix as if they were the true population moments. Figure \ref{fig_pf_intro} illustrates the use of this strategy on a random sample of 20 assets from Standard \& Poor's 500 Index (S\&P500). Clearly, the out of sample return is highly volatile at the end of the sample period, which can be traced back to large asset positions. Extreme asset weights and poor out of sample performance are well documented shortcomings of the traditional approach, see e.g. \cite{black1992global}, \cite{best1991sensitivity} and \cite{jorion1985international}. A plausible explanation is estimation risk; the fact that sample moments may be imprecise estimates of the population moments.\footnote{Contrary, \cite{green1992will} argue that extreme portfolio positions may exist in the population due to a dominant factor in equity returns. In that case the instability in Figure \ref{fig_pf_intro} could be caused by a dominant factor, not estimation risk. The instability could also be due to a misspecified portfolio.}

\par The effect of estimation risk on portfolio choice has been recognised since \cite{klein1976effect}, who showed that the optimal portfolio choice differs from the traditional choice in the presence of uncertain parameters. It is important to note that the estimation risk problem is not only a feature of small samples. For a given number of observations, estimation risk is increasing in the number of assets of the portfolio. The empirical study by \cite{demiguel2007optimal} suggests that estimation risk is large even when the portfolio size is modest and estimated based on five years of monthly observations. 

\begin{figure}
\centering
\includegraphics[scale=0.6]{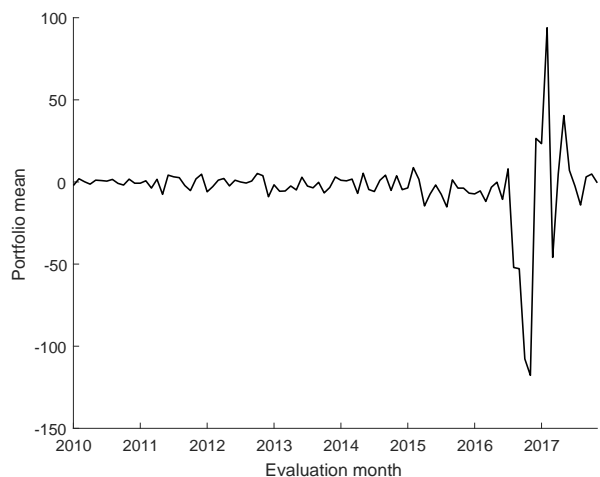}
\caption{\footnotesize \textbf{Illustration of estimation risk.} Monthly out of sample portfolio return based on the traditional approach. Each portfolio return is computed based on portfolio weights estimated using the previous 120 months. The data is a random sample of 20 stocks from the S\&P500 from 2000 to 2017.}
\label{fig_pf_intro}
\end{figure}

\par In this paper I show that optimal portfolio weights can be estimated in a machine learning (ML) framework.\footnote{See e.g. \cite{hastie2011elements} and \cite{murphy2012machine} for excellent discussions of machine learning. I will use machine learning as a general term, but will specifically be referring to supervised machine learning where the function to be estimated is linear.} Broadly speaking, the ML framework may be thought of as a penalized regression problem where the regressors are the asset returns, the coefficients are the portfolio weights, the outcome to be predicted is constant and a penalty is imposed to avoid large weights.\footnote{This formulation implies a mean-variance perspective in order to derive the relationship between ML methods and portfolio theory. Much of the existing literature focus on the minimum variance portfolio. This focus is in large justified by the difficulty in estimating means, see e.g. \cite{jorion1985international}. Both \cite{jagannathan2003risk} and \cite{demiguel2009generalized} argue that nothing much is lost by ignoring the mean altogether. On the other hand, \cite{jorion1986bayes} use simulation to show that for modest sample sizes, mean-variance approaches outperform the minimum variance portfolio. Furthermore, \cite{demiguel2007optimal} show that equal weighting, which incorporates both moments, seldom is outperformed by strategies ignoring the mean.} Estimating optimal portfolio weights with the ML framework has three important implications.

\par First, estimating portfolio weights with the ML framework is equivalent to choosing a portfolio in order to minimize total risk, which is the sum of the risk inherent in the optimal (population) portfolio and the estimation risk. This result follows because under quadratic utility, total risk is equivalent to an expected out of sample mean squared error, which is the minimization objective of ML algorithms. This mean squared error is simply the expected squared difference between a constant and the portfolio return. By using sample splitting, portfolio weights are estimated on one sample and tested on another. For assets showing unstable sample moments across subsamples (thus being subject to large estimation risk), an estimated weight based on one subsample might generalise poorly to other subsamples in terms of the mean squared error. By imposing a bound (penalty) on the weight, the out of sample mean squared error can be improved, reducing total risk and consequently the estimation risk.

\par Second, estimation risk may be decomposed into a bias-variance tradeoff. This decomposition is common in ML, and I discuss why the tradeoff is important in the portfolio context. For instance, the portfolio weights derived from the traditional approach are unbiased, but are likely to exhibit large variance in repeated samples of returns. In contrast, a passive strategy where asset weights are fixed to be equal may lead to large bias, but such weights do not vary in repeated samples of returns. In between, ML seeks to choose portfolio weights to balance bias with variance in order to minimize estimation risk. \cite{jagannathan2003risk} recognizes this tradeoff, but refers to it as a tradeoff between specification error and sampling error. 

\par Third, ``off-the-shelf'' ML methods can be used to estimate the optimal portfolio weights. These methods offer standardised ways of doing cross-validation, estimation and assessment of performance.\footnote{\cite{ban2016machine} also study ML for portfolio optimization, but focus on a particular method they label performance-based regularization, and not ``off-the-shelf'' ML methods. They use cross-validation for determining the weight penalty, and customize the procedure so that it targets the Sharpe ratio and bounds the range of possible penalty values. They document that the approach works well for risk reduction in small portfolios. In contrast, my focus is larger portfolios and standard ML algorithms where cross-validation and penalty bounds are implemented using standard software.} I use ``off-the-shelf'' ML to shed light on existing approaches and to introduce new methods for portfolio estimation. The main findings are discussed below.

\par The traditional approach is a special case of the ML framework, equivalent to a regression problem without penalty (OLS). I argue that the OLS formulation provides an alternative explanation for why the traditional approach is associated with large estimation risk. Since OLS is the best linear \emph{unbiased} estimator, the traditional strategy does not allow for a tradeoff between bias and variance. Thus, for large portfolios, the traditional approach may show large estimation risk due to overfitting in a regression sense.

\par Imposing constraints may reduce the overfitting problem. One approach is L1 regularization, where the sum of the absolute value of the portfolio weights is required to be smaller than some threshold. \cite{demiguel2009generalized} proposed to add this constraint to the weights of the minimum variance portfolio. They showed that a special case recovers the no short selling portfolio analysed by \cite{jagannathan2003risk}. In a mean-variance setting, \cite{demiguel2007optimal} pointed out that L1 regularization is equivalent to shrinking the expected returns towards the average return. My results elaborate on the mean-variance setting. Since L1 is equivalent to an ``off-the-shelf'' ML method known as Lasso, a well known relationship between Lasso and OLS can be used to derive new insights. In essence, L1 regularization implies that the weight estimates from the traditional approach (OLS) are shrunk by the same amount. For assets where this amount is larger than the traditional estimate, the Lasso weight is set to zero.\footnote{Both \cite{fan2012vast} and \cite{brodie2009sparse} study Lasso for portfolio estimation. However, they do not make this connection to the traditional approach. Furthermore, they do not use cross-validation to determine the penalty level. \cite{demiguel2009generalized} use cross-validation, but limits attention to the minimum variance portfolio.}

\par Another approach is L2 regularization, implying that the sum of squared portfolio weights must be smaller than a threshold. For the minimum variance portfolio, \cite{demiguel2009generalized} show that L2 regularization is equivalent to shrinking the covariance matrix towards the identity matrix, similar to the approach by \cite{ledoit2004honey,ledoit2004well}. In the mean-variance setting, L2 regularization is equivalent to Ridge regression by \cite{hoerl1970ridge}. Using standard ML results I show that Ridge regression shrinks the traditional weights by the same factor and that for penalty values in a specific range, Ridge regression outperforms the traditional approach in terms of estimation risk. It follows that Ridge regression outperforms the traditional approach if the optimal portfolio is well diversified. 

\par I introduce two other ``off-the-shelf'' ML methods for portfolio estimation; Principal Component regression and Spike and Slab regression. The former assumes that the asset returns are generated from some low dimensional model, such as e.g. a factor model. The idea is to estimate the portfolio weights using all assets, but only use the variation in returns that is attributable to the lower dimensional model. The size of the low dimensional model is determined by cross-validation. 

\par Spike and Slab is a Bayesian variable selection technique for linear regression. By assuming a Bernoulli prior (the ``Spike''), the approach uses a binary rule for including or excluding assets from the regression, i.e. the portfolio. Conditional on an asset being included, a Gaussian prior (the ``Slab'') is assumed for the portfolio weight. The Spike and Slab formulation results in a posterior for the included assets and a posterior for the portfolio weights, conditional on included assets only. By using Gibbs sampling, I draw from these posteriors several thousand times, resulting in an inclusion probability of each asset and posterior portfolio weight distributions. The Spike and Slab approach to portfolio selection bears some resemblance to the Empirical Bayes approach.\footnote{\cite{jorion1985international,jorion1986bayes} shrinks the sample mean of each asset toward the portfolio mean of the global minimum variance portfolio.} I show that the posterior portfolio weights are a combination of the portfolio weights from the traditional approach and the Gaussian prior means, conditional on included assets only. Thus, like Empirical Bayes, weights can be expressed as a combination of the sample mean and a prior, but unlike Empirical Bayes, the attention is limited to a subset of assets.

\par Based on simulation, I find that ML algorithms significantly improve on the traditional approach and several benchmark strategies, including the mean-variance portfolio without short selling, the minimum variance portfolio and the equally weighted portfolio. Consistent with the existing literature, I find that the constraints imposed by these benchmark strategies may work for small sample sizes. However, the strict nature of the constraints (disallowing negative weights, ignoring means or equal weighting) may be harmful in larger samples, where the sample moments are likely to be more precisely estimated. In contrast, ML algorithms impose ``softer'' constraints, in the sense that they will cause the estimated portfolio weights to converge to the traditional weights as the sample size grows. This is beneficial because as the sample size approaches infinity, the traditional approach is clearly the best choice. 

\par Finally, I apply ML to several different real world datasets. Using assets from the S\&P500, I find that the discussed ML algorithms yield similar out of sample Sharpe ratios, significantly outperforming the traditional approach, the minimum variance portfolio and the equal weight strategy. Using industry portfolios where each asset is a combination of stocks, I document similar results for the ML methods, the minimum variance portfolio and the equal weight strategy, suggesting that the estimation risk problem is not as pronounced for this type of portfolios. I also consider a dataset covering 200 cryptocurrencies. The estimation risk is expected to be large due to the short lifetime, but large number of such currencies. When the number of parameters exceeds the number of observations, both the traditional approach and the minimum variance portfolio are infeasible due to a degenerate covariance matrix. I find that the ML algorithms yield similar Sharpe ratios to the equal weighted strategy in this case.

\par The paper is organised as follows. Section \ref{sec_pf_framework} briefly introduces ML to the unfamiliar reader and lays out the framework connecting ML to portfolio theory. In Section \ref{sec_pf_methods}, I discuss existing approaches from a ML perspective and introduce new methods. I assess the performance of ML for reducing estimation risk based on artificial data calibrated to the U.S. stock market in Section \ref{sec_pf_sim}. Finally, I apply ML to several different datasets, including the S\&P500, industry portfolios and a cryptocurrency portfolio in Section \ref{sec_pf_empirical}. Detailed derivations of the propositions and equations in this paper can be found in Appendix \ref{sec_pf_appendix_derivations}.

\section{Machine Learning in a Portfolio Context}\label{sec_pf_framework}
\subsection{A Brief Introduction to Machine Learning}\label{sec_pf_framework_ml}
 Let $y$ be some outcome variable drawn from the model 
\begin{equation}\label{eq_pf_framework_model}
y = f(\x) + \varepsilon
\end{equation}
where $\x$ is a $m$-dimensional vector of covariates and $\varepsilon$ is normally distributed with mean zero and variance $\phi^2$. The objective of ML is to learn the function $f$ in order to predict future values of $y$. A ML algorithm $q$ outputs an estimate $\hat{f}_q$ of $f$ based on a training set of data, $\mathcal{T}=\{y_i,\x_i\}_{i=1}^n$. How well $\hat{f}_q$ predicts new values of $y$ can be evaluated using a mean squared error, which I will refer to as the \emph{generalisation error}
\begin{equation}\label{eq_pf_framework_mse0}
F(\hat{f}_q) = \E_{(y_0,\x_0)}[(y_0-\hat{f}_q(\x_0))^2]
\end{equation}
where $(y_0,\x_0)$ is a new observation not used for training, and the expectation is with respect to the distribution of this new observation. Taking the expectation across training sets gives the \emph{expected generalisation error} of algorithm $q$
\begin{equation}\label{eq_pf_framework_mse}
F_q = \E_{\mathcal{T}}[F(\hat{f}_q)]=\E_{\mathcal{T}}\left\{\E_{(y_0,\x_0)}[(y_0-\hat{f}_q(\x_0))^2]\right\}
\end{equation}
where the expectation $\E_\mathcal{T}$ is with respect to the training set that produced the function $\hat{f}_q$. In other words, the expected generalisation error of algorithm $q$ is the squared loss from drawing infinitely many training sets, estimating $f$ using algorithm $q$ on each set, and evaluating on an infinitely large test set. To gain further insight on the performance of the algorithm, it is common to decompose (\ref{eq_pf_framework_mse}) into a tradeoff between bias and variance
\begin{equation}\label{eq_pf_framework_bv}
F_q = \E_\mathcal{T}\left\{\E_{y_0}[(y_0-\hat{f}_q(\x_0))^2]\right\} =\underbrace{(f(\x_0)-\E_\mathcal{T}[\hat{f}_q(\x_0)])^2}_{\text{squared bias}} + \underbrace{\V_\mathcal{T}[\hat{f}_q(\x_0)]}_{\text{variance}} + \underbrace{\phi^2}_{\text{noise}}
\end{equation}
where it is assumed that $\x_0$ is non-random for simplicity. If $\hat{f}_q$ is estimated based on a simple, underfitted algorithm, predictions will be biased because the true function value $f(\x_0)$ could be far from the expected prediction in repeated training sets, $\E_\mathcal{T}[\hat{f}_q(\x_0)]$. However, an overly simplistic model will give similar predictions in repeated draws of the data, so the variance is likely to be low. In contrast, a flexible, overfitted algorithm may lead to low bias, but the variance is likely to be large because we can expect predictions to vary substantially in repeated data draws. Having both low bias and low variance (avoiding both over- and underfitting) translates into a low expected generalisation error. In the following I show that this general framework and its intuition is transferable to the problem of estimation risk in mean-variance portfolio analysis.

\subsection{Generalisation Error and Estimation Risk}\label{sec_pf_framework_estrisk}
I consider a framework where the investor has preferences described by the utility function $u(r)$ over some random portfolio return $r$. As is common in portfolio theory, I limit the attention to a case where the utility function is completely characterised by the first two moments of the return.\footnote{There is a large literature showing that mean-variance preferences not necessarily coincide with the expected utility framework, see e.g. \cite{levy1979approximating}.} I will assume that the agent has mean-variance preferences described by the quadratic utility function
\begin{equation}\label{eq_pf_framework_u}
u(r) = r - \frac{1}{2}\alpha r^2
\end{equation}
where $\alpha>0$ is the risk aversion coefficient and $r$ is the portfolio return. The use of quadratic utility is necessary to show the equivalence with ML, but as is shown below, the derived relative portfolio weights in the risky assets correspond exactly to the weights obtained from the more common exponential utility function.

\par The agent can invest $\theta_f\in\R$ in a risk free asset with a given return $r_f$ and invest $\boldsymbol{\theta}\in\R^m$ in $m$ risky assets with \emph{excess} return over the risk free asset given by the vector $\x$. I assume that $\x$ is multivariate normal with expected excess return $\boldsymbol{\mu}$ and covariance matrix $\boldsymbol{\Sigma}$. Using the constraint that the asset positions in the risk free and the risky assets must sum to one, $\theta_f+\mathbf{1}'\boldsymbol{\theta}=1$ where $\mathbf{1}$ is a vector of ones, the expected quadratic utility of the agent is given by
\begin{equation}\label{eq_pf_framework_eu}
\E_{\x}[u(r_f + \x'\boldsymbol{\theta})]
\end{equation}
The following proposition states that expected quadratic utility (\ref{eq_pf_framework_eu}) may be written as a generalisation error of the form (\ref{eq_pf_framework_mse0}), implying that  maximizing expected quadratic utility is equivalent to minimizing the generalisation error.
\begin{prop}\label{prop_pf_framework_generror} Generalisation error. Maximizing expected utility with a quadratic utility function is equivalent to minimizing the generalisation error
\begin{equation}\label{eq_pf_framework_generror}
F(\boldsymbol{\theta}) = \E_\x[(\bar{r}-\x'\boldsymbol{\theta})^2]
\end{equation}
where $\bar{r} = (1-\alpha r_f)/\alpha$.
\end{prop}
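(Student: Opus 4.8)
The plan is to reduce everything to a quadratic form in $\boldsymbol{\theta}$. I would substitute the budget constraint into the portfolio return, expand the expected quadratic utility (\ref{eq_pf_framework_eu}) as a polynomial in $\boldsymbol{\theta}$ of degree two, expand the candidate objective $F(\boldsymbol{\theta})$ in the same way, and show that the two differ only by an additive constant (and the positive multiplicative factor $\alpha$) that does not involve $\boldsymbol{\theta}$. Since $\alpha>0$, maximizing expected utility over $\boldsymbol{\theta}$ is then literally the same optimization problem as minimizing $F(\boldsymbol{\theta})$.

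Concretely, the first step is to eliminate $\theta_f$: using $\theta_f+\mathbf{1}'\boldsymbol{\theta}=1$ and the fact that $\x$ collects \emph{excess} returns, the realized portfolio return is $\theta_f r_f + \boldsymbol{\theta}'(\x+r_f\mathbf{1}) = r_f + \x'\boldsymbol{\theta}$, which is exactly the argument of $u$ in (\ref{eq_pf_framework_eu}). Plugging $r=r_f+\x'\boldsymbol{\theta}$ into $u(r)=r-\frac{1}{2}\alpha r^2$ and taking $\E_\x$, I would group terms by their dependence on $\boldsymbol{\theta}$ to obtain
\[
\E_{\x}\big[u(r_f+\x'\boldsymbol{\theta})\big]=\Big(r_f-\tfrac{1}{2}\alpha r_f^2\Big)+(1-\alpha r_f)\,\E_\x[\x'\boldsymbol{\theta}]-\tfrac{1}{2}\alpha\,\E_\x[(\x'\boldsymbol{\theta})^2].
\]
Next I would expand $F(\boldsymbol{\theta})=\E_\x[(\bar{r}-\x'\boldsymbol{\theta})^2]=\bar{r}^2-2\bar{r}\,\E_\x[\x'\boldsymbol{\theta}]+\E_\x[(\x'\boldsymbol{\theta})^2]$ and multiply by $-\tfrac{1}{2}\alpha$. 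The one step that actually requires attention is recognizing that the definition $\bar{r}=(1-\alpha r_f)/\alpha$ is precisely the choice that makes $\alpha\bar{r}=1-\alpha r_f$, so that the term linear in $\boldsymbol{\theta}$ in $-\tfrac{1}{2}\alpha F(\boldsymbol{\theta})$ matches the linear term in the display above; the quadratic terms $-\tfrac{1}{2}\alpha\,\E_\x[(\x'\boldsymbol{\theta})^2]$ then agree automatically. It follows that $\E_{\x}[u(r_f+\x'\boldsymbol{\theta})]=c-\tfrac{1}{2}\alpha F(\boldsymbol{\theta})$ with $c=r_f-\tfrac{1}{2}\alpha r_f^2+\tfrac{1}{2}\alpha\bar{r}^2$ a constant in $\boldsymbol{\theta}$, which gives the claimed equivalence.

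I expect no real obstacle here — the argument is essentially completing the square in $\x'\boldsymbol{\theta}$, with $\bar{r}$ playing the role of the value that absorbs the linear term — so the only care needed is in tracking the constants. As a sanity check I would note that all moments appearing are finite because $\x$ is multivariate normal (so $\boldsymbol{\mu}$ and $\boldsymbol{\Sigma}$ exist), and that the first-order condition of $F(\boldsymbol{\theta})$ recovers the familiar mean-variance weights proportional to $\boldsymbol{\Sigma}^{-1}\boldsymbol{\mu}$, consistent with the exponential-utility solution mentioned in the text preceding the proposition.
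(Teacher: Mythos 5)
Your proposal is correct and is essentially the paper's own argument: the paper completes the square pointwise via the positive affine transformation $\tilde{u}(r) = \tfrac{2}{\alpha}u(r) - \tfrac{1}{\alpha^2} = -\left(\tfrac{1}{\alpha}-r\right)^2$ and then takes expectations, whereas you expand $\E_\x[u(r_f+\x'\boldsymbol{\theta})]$ and $F(\boldsymbol{\theta})$ as quadratics in $\boldsymbol{\theta}$ and match coefficients to get $\E_\x[u] = c - \tfrac{1}{2}\alpha F(\boldsymbol{\theta})$ — the same identity obtained in a slightly different order (and indeed your constant $c$ simplifies to $\tfrac{1}{2\alpha}$, consistent with the paper's transformation). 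Your preliminary step eliminating $\theta_f$ via the budget constraint likewise matches the paper's separate derivation of the expected-utility expression.
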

Proposition \ref{prop_pf_framework_generror} has two implications. First, it can be used to derive the optimal portfolio weights. If the distribution of excess returns is known, $\x\sim\N(\boldsymbol{\mu},\boldsymbol{\Sigma})$, then minimizing (\ref{eq_pf_framework_generror}) with respect to $\boldsymbol{\theta}$ gives the optimal (population) portfolio weights
\begin{equation}\label{eq_pf_framework_thetaPop}
\boldsymbol{\theta}^* = (\boldsymbol{\Sigma} + \boldsymbol{\mu}\boldsymbol{\mu}')^{-1} \boldsymbol{\mu} \bar{r}
\end{equation}
and the corresponding minimum generalisation error $F_*=F(\boldsymbol{\theta}^*)$. Second, the proposition provides a link between the objective of ML and the portfolio problem. Suppose that the return distribution is unknown to the agent, but that he uses an ML algorithm $q$ to estimate the optimal portfolio weights based on empirical data. Let the outcome variable be constant $y=\bar{r}$ and let $f$ be the linear function $\x'\boldsymbol{\theta}$. It follows that the portfolio problem of choosing $\boldsymbol{\theta}$ to maximize expected quadratic utility may be viewed as the ML problem of estimating $\boldsymbol{\theta}$ to minimize an \emph{expected generalisation error} similar to (\ref{eq_pf_framework_mse})
\begin{equation}\label{eq_pf_framework_mlgenerror}
F_q= \E_\mathcal{T}\{\E_{\x_0}[(\bar{r}-\x_0'\hat{\boldsymbol{\theta}}_q)^2]\}
\end{equation}
where the expectation $\E_\mathcal{T}$ is with respect to the training data that was used to obtain the estimate $\hat{\boldsymbol{\theta}}_q$ and $\E_{\x_0}$ is with respect to the out of sample returns $\x_0\sim\N(\boldsymbol{\mu},\boldsymbol{\Sigma})$. It may be hard to see how it in practice is possible to minimize $F_q$ as it includes the distribution of the out of sample returns in addition to infinitely many draws of training data. However, $F_q$ can be approximated by cross-validation, see e.g. \cite{hastie2011elements}. Furthermore, $F_q$ may be written as the sum of the minimum generalisation error $F_*$ and an estimation risk component $R_q$
\begin{equation}
F_q = F_* + R_q
\end{equation} 
From this definition it follows that any ML algorithm minimizing the expected generalisation error $F_q$ is equivalently minimizing estimation risk $R_q$, since $F_*$ is an irreducible population value. In theory, if some algorithm $q$ truly minimizes the expected generalisation error such that $F_q=F_*$, it also maximizes expected quadratic utility. Proposition \ref{prop_pf_framework_estrisk} examines $R_q$ in closer detail.
\begin{prop}\label{prop_pf_framework_estrisk}Estimation risk. The estimation risk $R_q$ of ML algorithm $q$ is the difference between the expected generalisation error, $F_q$, and the minimum generalisation error, $F_*$, giving
\begin{equation}\label{eq_pf_framework_estrisk}
R_q = \underbrace{(\boldsymbol{\theta}^*-\E_\mathcal{T}[\hat{\boldsymbol{\theta}}_q])'\mathbf{A}(\boldsymbol{\theta}^*-\E_\mathcal{T}[\hat{\boldsymbol{\theta}}_q])}_\text{squared bias} + \underbrace{\text{tr}\left(\mathbf{A}\V_\mathcal{T}[\hat{\boldsymbol{\theta}}_q]\right)}_\text{variance}
\end{equation}
where $\mathbf{A}=\boldsymbol{\Sigma}+\boldsymbol{\mu}\boldsymbol{\mu}'$ and $\text{tr}(.)$ denotes the trace operator.
\end{prop}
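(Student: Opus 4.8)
The plan is to collapse the expected generalisation error (\ref{eq_pf_framework_mlgenerror}) into a single quadratic form in $\hat{\boldsymbol{\theta}}_q$ and then split that form in the usual bias--variance manner. First I would evaluate the inner expectation in (\ref{eq_pf_framework_mlgenerror}) for a \emph{fixed} weight vector $\boldsymbol{\theta}$. Expanding the square and using $\E_{\x_0}[\x_0\x_0'] = \boldsymbol{\Sigma} + \boldsymbol{\mu}\boldsymbol{\mu}' = \mathbf{A}$ gives $F(\boldsymbol{\theta}) = \bar{r}^2 - 2\bar{r}\boldsymbol{\mu}'\boldsymbol{\theta} + \boldsymbol{\theta}'\mathbf{A}\boldsymbol{\theta}$, a convex quadratic whose unconstrained minimiser is exactly the $\boldsymbol{\theta}^* = \bar{r}\mathbf{A}^{-1}\boldsymbol{\mu}$ of (\ref{eq_pf_framework_thetaPop}). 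Completing the square about $\boldsymbol{\theta}^*$ and using $\bar{r}\boldsymbol{\mu} = \mathbf{A}\boldsymbol{\theta}^*$ yields the identity
\[
F(\boldsymbol{\theta}) = F_* + (\boldsymbol{\theta}-\boldsymbol{\theta}^*)'\mathbf{A}(\boldsymbol{\theta}-\boldsymbol{\theta}^*), \qquad F_* := F(\boldsymbol{\theta}^*).
\]

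Next I would apply this identity with $\boldsymbol{\theta}$ replaced by the estimator $\hat{\boldsymbol{\theta}}_q$. Since $\hat{\boldsymbol{\theta}}_q$ depends only on the training set $\mathcal{T}$, which is independent of the test return $\x_0$, the expectation $\E_{\x_0}$ treats $\hat{\boldsymbol{\theta}}_q$ as constant, so $\E_{\x_0}[(\bar{r}-\x_0'\hat{\boldsymbol{\theta}}_q)^2] = F(\hat{\boldsymbol{\theta}}_q) = F_* + (\hat{\boldsymbol{\theta}}_q-\boldsymbol{\theta}^*)'\mathbf{A}(\hat{\boldsymbol{\theta}}_q-\boldsymbol{\theta}^*)$. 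Taking $\E_\mathcal{T}$ of both sides and subtracting $F_*$ gives $R_q = F_q - F_* = \E_\mathcal{T}\big[(\hat{\boldsymbol{\theta}}_q-\boldsymbol{\theta}^*)'\mathbf{A}(\hat{\boldsymbol{\theta}}_q-\boldsymbol{\theta}^*)\big]$.

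For the final step I would perform the bias--variance split, writing $\hat{\boldsymbol{\theta}}_q - \boldsymbol{\theta}^* = \big(\hat{\boldsymbol{\theta}}_q - \E_\mathcal{T}[\hat{\boldsymbol{\theta}}_q]\big) + \big(\E_\mathcal{T}[\hat{\boldsymbol{\theta}}_q] - \boldsymbol{\theta}^*\big)$ and expanding the quadratic form. The cross term vanishes under $\E_\mathcal{T}$ because $\hat{\boldsymbol{\theta}}_q - \E_\mathcal{T}[\hat{\boldsymbol{\theta}}_q]$ has mean zero while $\E_\mathcal{T}[\hat{\boldsymbol{\theta}}_q] - \boldsymbol{\theta}^*$ and $\mathbf{A}$ are deterministic; the deterministic square is the squared-bias term (symmetry of $\mathbf{A}$ lets me flip the sign inside so it reads $(\boldsymbol{\theta}^*-\E_\mathcal{T}[\hat{\boldsymbol{\theta}}_q])'\mathbf{A}(\boldsymbol{\theta}^*-\E_\mathcal{T}[\hat{\boldsymbol{\theta}}_q])$); and for the zero-mean term I use the trace identity $\E_\mathcal{T}[\mathbf{z}'\mathbf{A}\mathbf{z}] = \text{tr}(\mathbf{A}\,\E_\mathcal{T}[\mathbf{z}\mathbf{z}'])$ with $\mathbf{z} = \hat{\boldsymbol{\theta}}_q - \E_\mathcal{T}[\hat{\boldsymbol{\theta}}_q]$, whose second moment is $\V_\mathcal{T}[\hat{\boldsymbol{\theta}}_q]$. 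This reproduces (\ref{eq_pf_framework_estrisk}).

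The computation is essentially routine — it is the matrix analogue of the scalar decomposition (\ref{eq_pf_framework_bv}), with the weighting matrix $\mathbf{A}$ in place of the identity and $F_*$ playing the role of the irreducible noise. The only points requiring care are keeping track of which randomness each expectation acts on (in particular, exploiting independence of $\mathcal{T}$ and $\x_0$ so that $\hat{\boldsymbol{\theta}}_q$ is a constant under $\E_{\x_0}$), and the completing-the-square step, which is where $\mathbf{A} = \boldsymbol{\Sigma} + \boldsymbol{\mu}\boldsymbol{\mu}'$ — the uncentred second moment $\E[\x_0\x_0']$ rather than $\boldsymbol{\Sigma}$ alone — enters. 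Everything else is linear algebra and linearity of expectation.
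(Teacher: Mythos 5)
Your proposal is correct and follows essentially the same route as the paper's proof: both reduce $F_q - F_*$ to $\E_\mathcal{T}\{(\boldsymbol{\theta}^*-\hat{\boldsymbol{\theta}}_q)'\mathbf{A}(\boldsymbol{\theta}^*-\hat{\boldsymbol{\theta}}_q)\}$ via the substitution $\bar{r}\boldsymbol{\mu}=\mathbf{A}\boldsymbol{\theta}^*$ (your completing-the-square is the paper's add-and-subtract of $\boldsymbol{\theta}^{*'}\mathbf{A}\boldsymbol{\theta}^*$), and then apply the standard expectation-of-a-quadratic-form identity. The only cosmetic difference is that you derive that identity explicitly by splitting off the mean, whereas the paper cites it as a known rule.
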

\par Proposition \ref{prop_pf_framework_estrisk} highlights that the estimation risk of any ML strategy $q$ can be decomposed into a bias-variance tradeoff. As such, the intuition from ML carry over to the portfolio problem. Compared to the optimal weights, an ``underfitted portfolio'' where only a few assets receive non-zero weights will show relatively low variance out of sample due to relatively low exposure. However, bias can be substantial, as letting several assets have zero weights may forego investment opportunities that are present in the optimal portfolio. Contrary, an ``overfitted portfolio'' consisting of a large set of assets could give lower bias, but the variance is likely to be high in repeated samples of returns, due to a large set of parameters that needs to be estimated from the data. Thus, minimizing the sum of squared bias and variance is instrumental for obtaining low levels of estimation risk. 

\par Estimation risk is non-negative, $R_q\geq 0$ as long as $\mathbf{A}$ is positive semidefinite, with $R_q=0$ for $\hat{\boldsymbol{\theta}}_q=\boldsymbol{\theta}^*$. Intuitively, since the optimal portfolio weights $\boldsymbol{\theta}^*$ obtain the minimum generalisation error $F_*$, any estimator $\hat{\boldsymbol{\theta}}_q$ different from the optimal portfolio weights will provide a higher expected generalisation error $F_q$ and thus positive estimation risk. 

\begin{figure}
\begin{minipage}{1\textwidth}
\begin{center}
\begin{tikzpicture}[thick,scale=0.7, every node/.style={scale=0.7}]
\draw[thick,->] (0,0) -- (0,7) node[above]{$\mu$};
\draw[thick,->] (0,0) -- (7,0) node[below]{$\sigma$};
\draw[thick] (2,4) to [bend right=45](4,7);
\draw[thick] (1,1) to [bend left=35](5,4);
\draw[thin] (0,0) -- (5,7);
\draw[thin,dotted] (0,5) -- (3.5,5);
\draw[thin,dotted] (3.5,0) -- (3.5,5);
\draw[thin,dotted] (0,2.28) -- (1.6,2.28);
\draw[thin,dotted] (1.6,0) -- (1.6,2.28);
\draw[thin,dotted] (0,6.5) -- (3.5,4.95);
\draw (4,7) node[left]{$F_*$};
\draw (0,5) node[left]{$\mu^*$};
\draw (3.5,0) node[below]{$\sigma^*$};
\draw (0,2.28) node[left]{$\mu_p^*$};
\draw (1.6,0) node[below]{$\sigma_p^*$};
\draw (0,6.5) node[left]{$\bar{r}$}; 
\draw (2.8,6.0) node[left]{$\sqrt{F_*}$}; 
\draw (3.5,4.95) node[circle,fill,scale=0.2,label=above:$A$]{A};
\draw (1.6,2.28) node[circle,fill,scale=0.2,label=above:$A'$]{A'};
\end{tikzpicture}
\end{center}
\end{minipage}
\begin{minipage}{1\textwidth}
\caption{\footnotesize{\textbf{The optimal portfolio}. The objective of the agent is to minimize generalisation error in order to get a portfolio return as close as possible to the ideal return $\bar{r}$, which can be interpreted as some certain, maximum level of return. Minimizing generalisation error (maximizing expected quadratic utility) gives the optimal portfolio weights $\boldsymbol{\theta}^*$ located at $A$, with corresponding mean $\mu^*=\boldsymbol{\mu}'\boldsymbol{\theta}^*$, standard deviation $\sigma^*=(\boldsymbol{\theta}^{*'}\boldsymbol{\Sigma}\boldsymbol{\theta}^*)^{1/2}$ and minimum generalisation error $F_*$. This solution gives the minimum distance from $\bar{r}$ to $A$, which has length given by $\sqrt{F_*}$ where $F_*=(\bar{r}-\mu^*)^2 + (\sigma^*)^2$. The portfolio of optimal relative weights (the tangency portfolio) is located at $A'$, with portfolio mean $\mu_p^*=\boldsymbol{\mu}'\boldsymbol{\omega}^*$ and standard deviation $\sigma_p^*=(\boldsymbol{\omega}^{*'}\boldsymbol{\Sigma}\boldsymbol{\omega}^*)^{1/2}$, tangent to the efficient portfolio frontier. The optimal weights at $A$ has the same Sharpe ratio as the tangency portfolio at $A'$, indicated by the slope of the line through the origin.}}
\label{fig_pf_framework_sketch_pop}
\end{minipage}
\end{figure}
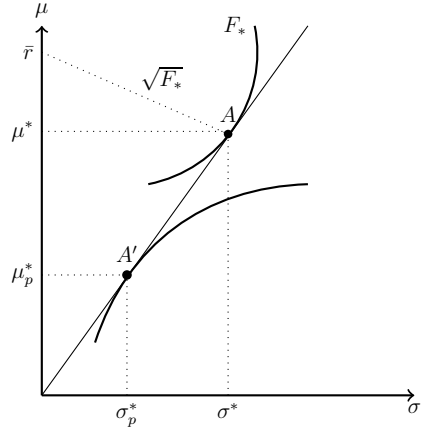

\par The results in this section are illustrated in Figure \ref{fig_pf_framework_sketch_pop}, where the optimal portfolio weights $\boldsymbol{\theta}^*$ are located at point $A$, with mean $\mu^*=\boldsymbol{\mu}'\boldsymbol{\theta}^*$ and standard deviation $\sigma^*=\sqrt{\boldsymbol{\theta}^{*'}\boldsymbol{\Sigma}\boldsymbol{\theta}^*}$. The curved line through point $A$ shows combinations of mean and standard deviation that achieves the minimum generalisation error, $F_*$, and thus maximum expected utility. At the optimal portfolio weights at point $A$, the minimum generalisation error is given by $F_*=\E_\x[(\bar{r}-\x'\boldsymbol{\theta}^*)^2]=(\bar{r}-\mu^*)^2+(\sigma^*)^2$. Thus, by the Pythagorean theorem, the line segment connecting $\bar{r}$ on the vertical axis with point $A$ has length given by the square root of the minimum generalisation error. In other words, the optimal portfolio weights minimize the distance from the certain, maximum obtainable return $\bar{r}$ to point $A$.

\par The portfolio at $A'$ is computed using the optimal \emph{relative} weights in the $m$ risky assets\footnote{Equation (\ref{eq_pf_framework_tangencyPop}) is derived based on quadratic utility, but the exact same result holds for the exponential utility case, see e.g. \cite{demiguel2007optimal}. However, the use of quadratic utility is necessary to establish the link between utility and generalisation error in Proposition \ref{prop_pf_framework_generror}.}
\begin{equation}\label{eq_pf_framework_tangencyPop}
\boldsymbol{\omega}^* = \frac{\boldsymbol{\theta}^*}{\mathbf{1}'\boldsymbol{\theta}^*} = \frac{\boldsymbol{\Sigma}^{-1}\boldsymbol{\mu}}{\mathbf{1}'\boldsymbol{\Sigma}^{-1}\boldsymbol{\mu}}
\end{equation}
which is the well known expression for the tangency portfolio, with corresponding portfolio mean $\mu_p^*=\boldsymbol{\mu}'\boldsymbol{\omega}^*$ and standard deviation $\sigma_p^*=\sqrt{\boldsymbol{\omega}^{*'}\boldsymbol{\Sigma}\boldsymbol{\omega}}$. According to portfolio theory, the tangency portfolio $\boldsymbol{\omega}^*$ maximizes the Sharpe ratio $\mu_p^*/\sigma_p^*$, which is the slope of the line from the origin through $A'$, tangent to the efficient portfolio frontier. In fact, this observation also tells us that $\boldsymbol{\theta}^*$ has the same optimal Sharpe ratio as the tangency portfolio $\boldsymbol{\omega}^*$. Indeed,
\begin{equation}
\frac{\mu_p^*}{\sigma_p^*}=\frac{\boldsymbol{\mu}'\boldsymbol{\omega}^*}{(\boldsymbol{\omega}^{*'}\boldsymbol{\Sigma}\boldsymbol{\omega}^*)^{1/2}} = \frac{\boldsymbol{\mu}'\boldsymbol{\theta}^*c}{(c^2\boldsymbol{\theta}^{*'}\boldsymbol{\Sigma}\boldsymbol{\theta}^*)^{1/2}}=\frac{\mu^*}{\sigma^*}
\end{equation}
where $c=1/\mathbf{1}'\boldsymbol{\theta}^*$. Hence, both $A$ and $A'$ lie on the same line through the origin. In summary, the optimal portfolio weights that minimize the generalisation error has the same Sharpe ratio as the tangency portfolio. This has important practical implications, because it is straightforward to compare ML portfolios that estimate $\boldsymbol{\theta}^*$ to methods that estimate $\boldsymbol{\omega}^*$ by comparing their Sharpe ratios.

\subsection{The Machine Learning Portfolio}\label{sec_pf_framework_sample}
The results in propositions \ref{prop_pf_framework_generror} and \ref{prop_pf_framework_estrisk} suggest that any supervised ML algorithm can be used for portfolio selection, as the target of such algorithms is the expected generalisation error. I mainly restrict the discussion in this paper to linear ML models of the form
\begin{equation}\label{eq_pf_framework_ml}
\argmin_{\boldsymbol{\theta}} \left\{\frac{1}{n}\sum_{i=1}^n (\bar{r}-\x_i'\boldsymbol{\theta})^2\right\} \text{ subject to } P(\boldsymbol{\theta}) \leq s
\end{equation}
where $P(\boldsymbol{\theta})$ is some penalty function and $s$ is some threshold estimated from the data. Consider first the penalty $P(\boldsymbol{\theta})=\sum_{j=1}^m I(\theta_j\neq 0)$. Here $I(\theta_j\neq 0)$ is an indicator function taking the value one if asset $j$ receives a non-zero weight and zero otherwise. In this case $s$ may be interpreted as the maximum number of assets allowed in the risky portfolio. This formulation addresses estimation risk as follows. Suppose we set $s=2$ and solve (\ref{eq_pf_framework_ml}). Ignoring several assets that possibly have non-zero optimal weights would lead to a relatively low variance component, but possibly high bias. Contrary, crowding the portfolio with a large set of assets, e.g. $s=100$, could ensure that no important assets are left out, but possibly lead to high variance. The objective is thus to choose $s$ in order to minimize expected generalisation error, and thus also estimation risk.

\par The specific penalty above is known as best subset selection, see e.g. \cite{james2013introduction}. With 20 assets there are over one million different portfolios to choose from, and best subset is therefore computationally too expensive for large portfolios. Broadly speaking, many ML algorithms provide approximations to the best subset problem through different specifications of the penalty function. I will elaborate on different choices of $P(\boldsymbol{\theta})$ in Section \ref{sec_pf_methods}, but restrict the discussion below to choice of the tuning parameter $s$. For that purpose, it is common to rewrite (\ref{eq_pf_framework_ml}) as
\begin{equation}\label{eq_pf_framework_ml_lambda}
\hat{\boldsymbol{\theta}}_q = \argmin_{\boldsymbol{\theta}} \left\{\frac{1}{n}\sum_{i=1}^n (\bar{r}-\x_i'\boldsymbol{\theta})^2 + \lambda P(\boldsymbol{\theta})\right\}
\end{equation}
where $\hat{\boldsymbol{\theta}}_q$ are the estimated optimal portfolio weights from using algorithm $q$, and the algorithm is determined by what type of penalty $P(\boldsymbol{\theta})$ that is used. Furthermore, $\lambda$ is the Lagrange multiplier associated with the penalty and I discuss how to choose $\lambda$ in order to minimize expected generalisation error below.

\begin{figure}
\begin{minipage}{1\textwidth}
\begin{subfigure}{0.50\textwidth}
\begin{center}
\begin{tikzpicture}[thick,scale=0.7, every node/.style={scale=0.7}]
\draw[thick,->] (0,0) -- (0,7) node[above]{$\mu$};
\draw[thick,->] (0,0) -- (7,0) node[below]{$\sigma$};
\draw[thin] (0,0) -- (2.5,7);
\draw[thin] (0,0) -- (5,7);
\draw[thin,dotted] (0,6.5) -- (2.15,6);
\draw[thin,dotted] (0,3.3) -- (1.2,3.3);
\draw[thin,dotted] (1.2,0) -- (1.2,3.3);
\draw[thin,dotted] (0,2.28) -- (1.6,2.28);
\draw[thin,dotted] (1.6,0) -- (1.6,2.28);
\draw (0,6.5) node[left]{$\bar{r}$};
\draw (0,3.3) node[left]{$\hat{\mu}_p$};
\draw (1.2,0) node[below]{$\hat{\sigma}_p$};
\draw (0,2.28) node[left]{$\mu_p$};
\draw (1.6,0) node[below]{$\sigma_p$};
\draw[thick] (0.5,4.4) to [bend right=40](2.2,7);
\draw[thick] (1,2.5) to [bend left=23](2.5,5);
\draw (3.5,4.95) node[circle,fill,scale=0.2,label=above:$A$]{A};
\draw (2.15,6) node[circle,fill,scale=0.2,label=above right:$B$]{B};
\draw (1.2,3.3) node[circle,fill,scale=0.2,label=above left:$B'$]{B'};
\draw (1.6,2.28) node[circle,fill,scale=0.2,label=above:$A'$]{A'};
\end{tikzpicture}
\caption{In sample}
\label{fig_pf_framework_sketch_sample_train}
\end{center}
\end{subfigure}\hspace*{\fill}
\begin{subfigure}{0.50\textwidth}
\begin{center}
\begin{tikzpicture}[thick,scale=0.7, every node/.style={scale=0.7}]
\draw[thick,->] (0,0) -- (0,7) node[above]{$\mu$};
\draw[thick,->] (0,0) -- (7,0) node[below]{$\sigma$};
\draw[thin] (0,0) -- (7,4.5);
\draw[thin] (0,0) -- (5,7);
\draw (3.2,6.2) node[left]{$\sqrt{F_\text{OLS}}$};
\draw (2.4,5.2) node[left]{$\sqrt{F_*}$}; 
\draw (5.4,4.2) node[left]{$\sqrt{R_\text{OLS}}$}; 
\draw[thin,dotted] (0,6.5) -- (6.8,4.4);
\draw[thin,dotted] (0,6.5) -- (3.5,4.95);
\draw[<->,thin] (3.5,4.95) -- (6.8,4.4);
\draw[thin,dotted] (0,2.28) -- (1.6,2.28);
\draw[thin,dotted] (1.6,0) -- (1.6,2.28);
\draw[thin,dotted] (0,3) -- (4.65,3);
\draw[thin,dotted] (4.65,0) -- (4.65,3);
\draw (0,6.5) node[left]{$\bar{r}$};
\draw (0,3) node[left]{$\tilde{\mu}_p$};
\draw (4.65,0) node[below]{$\tilde{\sigma}_p$};
\draw (0,2.28) node[left]{$\mu_p$};
\draw (1.6,0) node[below]{$\sigma_p$};
\draw (3.5,4.95) node[circle,fill,scale=0.2,label=below:$A$]{A};
\draw (6.8,4.4) node[circle,fill,scale=0.2,label=above:$B$]{B};
\draw (4.65,3) node[circle,fill,scale=0.2,label=above:$B'$]{B'};
\draw (1.6,2.28) node[circle,fill,scale=0.2,label=above:$A'$]{A'};
\end{tikzpicture}
\caption{Population}
\label{fig_pf_framework_sketch_sample_test}
\end{center}
\end{subfigure}
\caption{\footnotesize{\textbf{The traditional approach and estimation risk.} The figures illustrate the problem of estimation risk for the traditional approach applied to a portfolio with $m$ assets, assuming only one training set. The population solution from Figure \ref{fig_pf_framework_sketch_pop} is located at $A$ and $A'$ in both figures above. Figure \ref{fig_pf_framework_sketch_sample_train}: In sample, the OLS estimated weights $\hat{\boldsymbol{\theta}}$ yield the solution at $B$. This solution is closer to $\bar{r}$ than the population solution at $A$ due to overfitting and thus a low in-sample error. The in-sample tangency portfolio $\hat{\boldsymbol{\omega}}$ is located at $B'$, with corresponding in-sample moments $\hat{\mu}_p = \hat{\boldsymbol{\mu}}'\hat{\boldsymbol{\omega}}$ and $\hat{\sigma}_p = (\hat{\boldsymbol{\omega}}'\hat{\boldsymbol{\Sigma}}\hat{\boldsymbol{\omega}})^{1/2}$, respectively. Figure \ref{fig_pf_framework_sketch_sample_test}: Point $B$ shows the OLS weights evaluated at the population values. Clearly, $B$ is further from the ideal return $\bar{r}$ than the population solution at $A$, which is due to the fact that the overfitted OLS weights $\hat{\boldsymbol{\theta}}$ generalise poorly. The distance from $\bar{r}$ to $A$ and $B$ is given by $\sqrt{F_*}$ and $\sqrt{F_\text{OLS}}$, respectively, where $F_\text{OLS} = (\bar{r}-\boldsymbol{\mu}'\hat{\boldsymbol{\theta}})^2 + \hat{\boldsymbol{\theta}}'\boldsymbol{\Sigma}\hat{\boldsymbol{\theta}}$. Thus, the length from the optimal solution at $A$ to the OLS solution at $B$ is given by the square root of the estimation risk, $\sqrt{R_\text{OLS}}$. The out of sample tangency portfolio is located at $B'$ with portfolio mean $\tilde{\mu}_p = \boldsymbol{\mu}'\hat{\boldsymbol{\omega}}$ and portfolio standard deviation $\tilde{\sigma}_p = (\hat{\boldsymbol{\omega}}'\boldsymbol{\Sigma}\hat{\boldsymbol{\omega}})^{1/2}$. In terms of Sharpe ratio, indicated by the lines through the origin, the out of sample tangency portfolio at $B'$, makes the agent worse off than the population tangency portfolio $A'$, due to estimation risk.}}
\label{fig_pf_framework_sketch_sample}
\end{minipage}
\end{figure}

\subsubsection{When $\lambda=0$: The Traditional Approach}
Ordinary least squares (OLS) is an important special case of (\ref{eq_pf_framework_ml_lambda}) where the penalty parameter is $\lambda=0$. Solving the problem in this case yields
\begin{equation}\label{eq_pf_framework_thetaSample}
\hat{\boldsymbol{\theta}} =(\X'\X)^{-1}\X'\y
\end{equation}
where $\X$ is the $n\times m$ data matrix of returns and $\y=\mathbf{1}\bar{r}$ is a constant $n\times 1$ vector. Note that (\ref{eq_pf_framework_thetaSample}) is the sample counterpart to the optimal weights in (\ref{eq_pf_framework_thetaPop}), i.e. $\hat{\boldsymbol{\theta}} = (\hat{\boldsymbol{\Sigma}}+\hat{\boldsymbol{\mu}}\hat{\boldsymbol{\mu}}')^{-1}\hat{\boldsymbol{\mu}}\bar{r}$, where the sample mean is $\hat{\boldsymbol{\mu}} = \frac{1}{n} \X'\mathbf{1}$ and the maximum likelihood sample covariance is $\hat{\boldsymbol{\Sigma}} = \frac{1}{n}(\X-\mathbf{1}\hat{\boldsymbol{\mu}}')'(\X-\mathbf{1}\hat{\boldsymbol{\mu}}')=\frac{1}{n}(\X'\X)-\hat{\boldsymbol{\mu}}\hat{\boldsymbol{\mu}}'$. The estimated optimal relative weights $\hat{\boldsymbol{\omega}} = \hat{\boldsymbol{\theta}}/\mathbf{1}'\hat{\boldsymbol{\theta}}$ then correspond to the sample version of the tangency portfolio. Hence, the OLS solution is equivalent with the traditional approach to portfolio estimation, where sample moments are used directly in the Markowitz formulae. The connection between OLS and the sample counterpart to the theory provides a way of discussing the shortcomings of the traditional approach with regards to estimation risk.

\par It is well known that OLS provides the best linear unbiased estimator, see e.g. \cite{hayashi2000econometrics}, but the OLS solution can in many cases be severely overfitted, so that predictions generalise poorly to new data. The traditional approach may be thought of as an OLS problem, leading to low bias but possibly overfitting, thus providing a poor generalisation error. In other words, the traditional approach only offers minimum variance conditional on bias being zero, which is not the same as minimizing the sum of squared bias and variance. 

\par An illustration of the estimation risk problem for the traditional approach is provided in Figure \ref{fig_pf_framework_sketch_sample}. Consider Figure \ref{fig_pf_framework_sketch_sample_train}, where point B represents the in-sample solution from the traditional approach based on a large portfolio of $m$ assets. Using the OLS analogy, estimating a regression with many assets is likely leading to overfitting unless the training data is very large. The implication is a low in-sample root mean squared error, measured as the distance from $\bar{r}$ to $B$, lower than the population error measured from $\bar{r}$ to $A$. The intuition is that the sample solution is highly flexible, using the $m$ parameters to fit spurious patterns in the estimated sample moments, leading to a too low in-sample error. Figure \ref{fig_pf_framework_sketch_sample_test} shows the estimation risk of the traditional approach when the OLS solution is applied to the population moments (i.e. out of sample). The spurious patterns picked up in sample does not generalise to the population, leading to a high generalisation error measured from $\bar{r}$ to $B$ as $\sqrt{F_\text{OLS}}$. The distance from $B$ to $A$ is therefore given by the square root of the estimation risk, $\sqrt{R_\text{OLS}}$. Clearly, the traditional approach yields positive estimation risk in this case.

\subsubsection{When $\lambda>0$: Approximating the Generalisation Error}
Compared to OLS, any ML algorithm with a positive penalty $\lambda>0$ introduces bias in the portfolio weight estimates, but as long as the decrease in variance is larger than the increase in squared bias, the expected generalisation error will decrease, thereby also leading to lower estimation risk. 

\par The problem however is how to choose $\lambda$ in order to minimize expected generalisation error $F_q$. This error is unobserved, but ML algorithms approximate it by heuristic sample splitting techniques such as \emph{$K$-fold cross-validation}. The training set of returns is randomly assigned to $K$ subsamples or ``folds'' without replacement. Let $\mathcal{I}_k$ denote the index set of returns assigned to fold $k$ and $\mathcal{I}_{-k}$ the index set of the remaining returns not assigned to fold $k$. For a given algorithm $q$ and a given value of $\lambda$, estimate (\ref{eq_pf_framework_ml_lambda}) using the return data in $\mathcal{I}_{-k}$ and denote the estimated portfolio vector by $\hat{\boldsymbol{\theta}}_{q,\mathcal{I}_{-k}}$. Test the estimated portfolio on the hold-out fold using the mean squared error 
\begin{equation}
\hat{F}_q^k(\lambda) = \frac{1}{|\mathcal{I}_k|}\sum_{i\in\mathcal{I}_k} (\bar{r}-\x_i'\hat{\boldsymbol{\theta}}_{q,\mathcal{I}_{-k}})^2
\end{equation}
where $|\mathcal{I}_k|$ denotes the number of observations in fold $k$. Repeat this process for each fold $k=1,\hdots,K$ and compute the average error $\hat{F}_q(\lambda) = \frac{1}{K}\sum_{k=1}^K \hat{F}_q^k(\lambda)$. Then $\lambda$ can be chosen to minimize this error, $\lambda^*=\argmin_\lambda \hat{F}_q(\lambda)$, and the optimal ML portfolio $\hat{\boldsymbol{\theta}}_q$ is obtained by estimating (\ref{eq_pf_framework_ml_lambda}) on all of the training data with $\lambda=\lambda^*$.

\par If cross-validation works well, the estimated weights $\hat{\boldsymbol{\theta}}_q$ will, when applied to out of sample returns, provide an expected generalisation error that is close to the true value $F_q$ given in (\ref{eq_pf_framework_mlgenerror}). For some algorithm $q$ estimated with a positive penalty level, we should expect that $F_q<F_\text{OLS}$. By the estimation risk formula (\ref{eq_pf_framework_estrisk}), it thus follows that $R_\text{OLS}>R_q>R_*=0$. In terms of Figure \ref{fig_pf_framework_sketch_sample_test}, this would result in a ML portfolio on a line from the origin lying between A and B, providing lower estimation risk than the traditional approach. In terms of the in-sample situation in Figure \ref{fig_pf_framework_sketch_sample_train}, we would expect a larger training error compared to the traditional solution at B.

\par Dividing the data into folds has an intuitive explanation. If the first and second moment of the returns are unstable across folds, it is taken as evidence that the sample moments are imprecise estimates of the population moments, and hence that estimation risk is large. To illustrate how cross-validation mitigate the problem, suppose that the number of folds is $K=2$, and that the mean of a particular asset $j$ is relatively high with a relatively low variance in the first fold. Based on this fold it would be optimal to invest a positive amount in this asset. However, if the second fold shows the opposite case; a low mean and a high variance for the same asset, the positive weight will generalise poorly to the second fold. In general, exposure to assets that show unstable moments across folds will result in a high out of sample mean squared error on the left out fold. Shrinking the exposure to such assets by increasing $\lambda$ will reduce estimation risk. Obviously, asset moments will stabilize across folds as the number of observations increases, so that as $n\rightarrow \infty$, the optimal penalty level will approach zero. As such, any ML algorithm of the form (\ref{eq_pf_framework_ml_lambda}) will approach the traditional approach, which again approaches the optimal portfolio weights.

\begin{figure}
\centering
\begin{minipage}{1\textwidth}
\begin{subfigure}{0.50\textwidth}
\includegraphics[width=\linewidth]{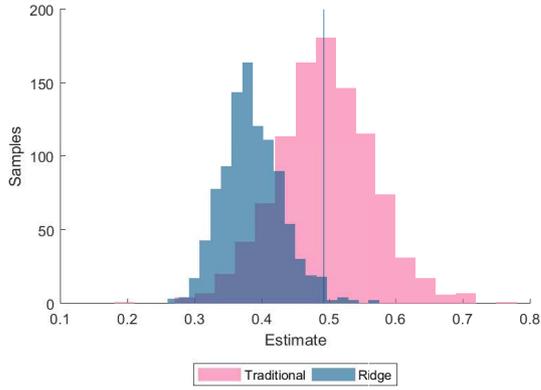}
\caption{Estimates of $\theta_1^*$}
\label{fig_pf_ml_hist}
\end{subfigure}\hspace*{\fill}
\begin{subfigure}{0.50\textwidth}
\includegraphics[width=\linewidth]{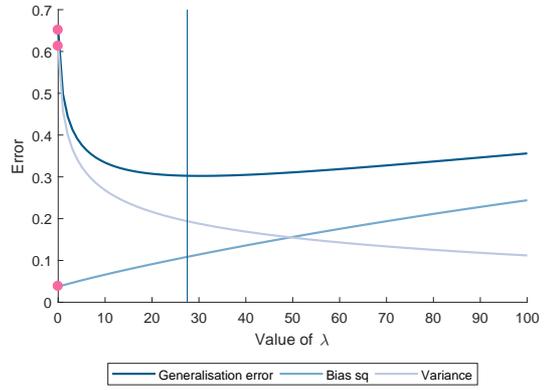}
\caption{Bias-Variance}
\label{fig_pf_ml_biasvariance}
\end{subfigure}
\medskip
\begin{subfigure}{0.50\textwidth}
\includegraphics[width=\linewidth]{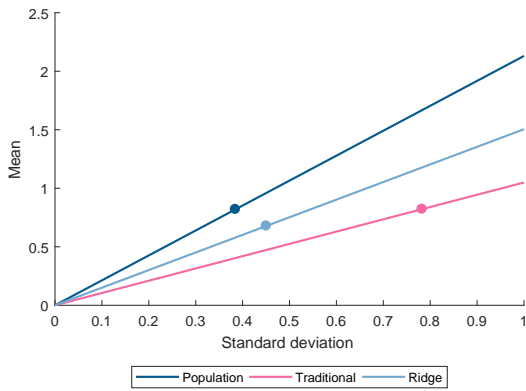}
\caption{Estimation Risk}
\label{fig_pf_ml_risk}
\end{subfigure}\hspace*{\fill}
\begin{subfigure}{0.50\textwidth}
\includegraphics[width=\linewidth]{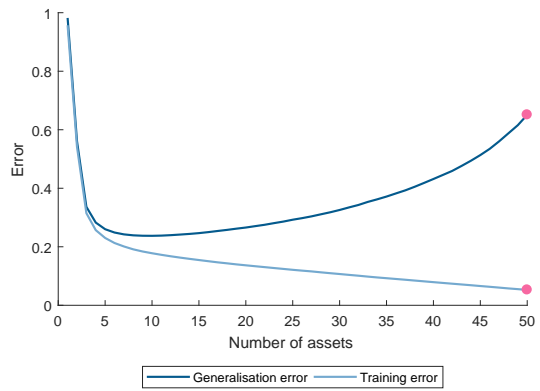}
\caption{Training vs Generalisation Error}
\label{fig_pf_ml_traintest}
\end{subfigure}\\
\caption{{\footnotesize \textbf{The traditional approach vs machine learning}. Figure \ref{fig_pf_ml_hist}: estimates of $\theta_1^*$ from the traditional approach and Ridge regression with 5-fold cross validation based on 1000 repeated draws of the training data. Figure \ref{fig_pf_ml_biasvariance}: decomposition of the expected generalisation error into squared bias and variance for varying values of $\lambda$. Figure \ref{fig_pf_ml_risk}: average out of sample Sharpe ratios evaluated at the population moments, similar to Figure \ref{fig_pf_framework_sketch_sample_test}. Figure \ref{fig_pf_ml_traintest}: training and expected generalisation error as a function of the number of assets in the portfolio, ranging from 1-50.}}
\label{fig_pf_ml}
\end{minipage}
\end{figure}

\subsubsection{Simulation Study: Machine Learning vs Traditional Approach}
As an illustration of the difference between the traditional approach and ML, I use simulated returns of $m=50$ assets from a multivariate normal distribution. The expected excess return vector $\boldsymbol{\mu}$ is assumed to be equal to zero for all but the three first assets. The population covariance $\boldsymbol{\Sigma}$ is constructed so that all assets are highly positively correlated, with a correlation coefficient of $0.95$. Based on this specification, the optimal portfolio $\boldsymbol{\theta}^*$ shows large positive investments in the two first assets and relatively small positions in the remaining assets. In general it is difficult to estimate the small positions in any finite sample, and ignoring or shrinking them may improve out of sample performance.

\par I use $n=70$ periods and estimate the optimal portfolio using the traditional approach (\ref{eq_pf_framework_thetaSample}) and one particular ML approach, Ridge regression, see Section \ref{sec_pf_methods_ridgelasso} for details. Based on repeated draws of the training data I obtain 1000 estimates of the portfolio weights from both approaches.

\par Figure \ref{fig_pf_ml_hist} shows histograms of all 1000 weight estimates of the first asset for both approaches. For Ridge regression, each estimate was obtained by using 5-fold cross validation to obtain the optimal penalty level. Clearly, the ML approach introduces bias in the estimate of $\theta_1^*$, but reduces variance. Figure \ref{fig_pf_ml_biasvariance} plots the expected generalisation error for Ridge regression together with its bias-variance decomposition for different values of the tuning parameter $\lambda$. The traditional approach is shown as red dots along the vertical axis, corresponding to the situation where $\lambda=0$. Increasing the tuning parameter leads to a reduction in variance at the cost of increasing bias (lowering return), but the overall error is reduced for penalty values below some threshold indicated by the vertical line. The threshold is reported as the average optimal $\lambda$-value across all training sets. Figure \ref{fig_pf_ml_risk} illustrates the estimation risk of each approach, similar to the sketch in Figure \ref{fig_pf_framework_sketch_sample_test}. In this study the ML approach reduces estimation risk by 70\% compared to the traditional approach.

\par The best subset idea is illustrated in Figure \ref{fig_pf_ml_traintest}, where the expected generalisation error of the traditional approach is plotted together with the mean squared error in-sample (training error) for a varying number of included assets. Underfitted portfolios with too few assets will generalise poorly. Increasing the number of assets reduces the generalisation error, but including too many assets results in overfitting and poor generalisation. The traditional approach with 50 assets show the lowest training error and thus the highest in-sample Sharpe ratio, but this ratio is not sustainable out of sample.

\section{Machine Learning Algorithms}\label{sec_pf_methods}
In this section I discuss ``off-the-shelf'' ML algorithms for portfolio estimation. In Section \ref{sec_pf_methods_ridgelasso}, I shed new light on existing shrinkage approaches by relating them to the ML literature. I introduce new tools for portfolio estimation in Section \ref{sec_pf_methods_pcr} and Section \ref{sec_pf_methods_ss}.

\subsection{Ridge and Lasso Regression}\label{sec_pf_methods_ridgelasso}
Ridge regression is a special case of (\ref{eq_pf_framework_ml_lambda}) obtained by specifying the penalty as $P(\boldsymbol{\theta})=\sum_{j=1}^m \theta_j^2$. The penalty is continuous and differentiable, leading to a closed form solution for the portfolio weights. Solving (\ref{eq_pf_framework_ml_lambda}) with the quadratic penalty gives the Ridge regression estimator
\begin{equation}\label{eq_pf_framework_thetaRidge}
\hat{\boldsymbol{\theta}}_{\text{R}}=(\X'\X+\lambda\I)^{-1}\X'\y
\end{equation}
and the relative weights $\hat{\boldsymbol{\omega}}_\text{R} = \hat{\boldsymbol{\theta}}_\text{R}/\mathbf{1}'\hat{\boldsymbol{\theta}}_\text{R}$. It follows from standard results, see e.g. \cite{efron2016computer}, that the Ridge regression solution (\ref{eq_pf_framework_thetaRidge}) is equivalent to the traditional approach (\ref{eq_pf_framework_thetaSample}), with an adjusted covariance matrix $\hat{\boldsymbol{\Sigma}}_\text{R}=\hat{\boldsymbol{\Sigma}} + \frac{\lambda}{n} \I$.\footnote{Plugging $\hat{\boldsymbol{\mu}}$ and $\hat{\boldsymbol{\Sigma}}_\text{R}$ into (\ref{eq_pf_framework_thetaPop}) gives $(\hat{\boldsymbol{\Sigma}}_\text{R} +\hat{\boldsymbol{\mu}}\hat{\boldsymbol{\mu}}')^{-1}\hat{\boldsymbol{\mu}}\bar{r}=(\hat{\boldsymbol{\Sigma}}+\frac{\lambda}{n}\mathbf{I} +\hat{\boldsymbol{\mu}}\hat{\boldsymbol{\mu}}')^{-1}\hat{\boldsymbol{\mu}}\bar{r}=(\frac{1}{n}\X'\X + \frac{\lambda}{n}\mathbf{I})^{-1}\frac{1}{n}\X'\mathbf{1}\bar{r}=(\X'\X + \lambda\mathbf{I})^{-1}\X'\y=\hat{\boldsymbol{\theta}}_\text{R}$.}  Thus, Ridge regression leaves the asset sample means unchanged, but increases the variance of each asset by a constant amount $\lambda/n$. It immediately follows that also the tangency portfolio $\hat{\boldsymbol{\omega}}_\text{R}$ will be based on the sample mean and the adjusted covariance matrix, $\hat{\boldsymbol{\Sigma}}_\text{R}$. Using a well known relationship between Ridge regression and OLS, I establish the following result.
\begin{prop}\label{prop_pf_framework_Ridge}
Ridge regression yields lower estimation risk than the traditional approach for penalty values $\lambda\in(0,\bar{\lambda})$, where $\bar{\lambda} = 2F_*/{\sum_{j=1}^m (\theta_j^*)^2}$.
\end{prop}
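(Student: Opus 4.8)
The plan is to reduce the statement to the classical comparison of ridge and ordinary least squares (see \cite{hoerl1970ridge}), carried out in the eigenbasis of $\mathbf{A}$. By Proposition \ref{prop_pf_framework_estrisk}, $R_q$ splits into an $\mathbf{A}$-weighted squared bias and an $\mathbf{A}$-weighted variance; since the traditional estimator $\hat{\boldsymbol{\theta}}$ is unbiased, $R_\text{OLS}=\text{tr}(\mathbf{A}\,\V_\mathcal{T}[\hat{\boldsymbol{\theta}}])$, whereas for ridge both terms are present. First I would record the ``well known relationship'': since $\X'\y=\X'\X\hat{\boldsymbol{\theta}}$, we have $\hat{\boldsymbol{\theta}}_\text{R}=(\X'\X+\lambda\I)^{-1}\X'\X\,\hat{\boldsymbol{\theta}}=\mathbf{W}_\lambda\hat{\boldsymbol{\theta}}$, which is merely a restatement of the covariance adjustment $\hat{\boldsymbol{\Sigma}}_\text{R}=\hat{\boldsymbol{\Sigma}}+\tfrac{\lambda}{n}\I$ noted above the proposition. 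Treating $\tfrac{1}{n}\X'\X$ as $\mathbf{A}$ and identifying the irreducible noise $\phi^2$ in (\ref{eq_pf_framework_model}) with the minimum generalisation error $F_*$, the standard OLS moments read $\E_\mathcal{T}[\hat{\boldsymbol{\theta}}]=\boldsymbol{\theta}^*$ and $\V_\mathcal{T}[\hat{\boldsymbol{\theta}}]=\tfrac{F_*}{n}\mathbf{A}^{-1}$, so $\mathbf{W}_\lambda=(\mathbf{A}+\tfrac{\lambda}{n}\I)^{-1}\mathbf{A}$, $\E_\mathcal{T}[\hat{\boldsymbol{\theta}}_\text{R}]=\mathbf{W}_\lambda\boldsymbol{\theta}^*$, and $\V_\mathcal{T}[\hat{\boldsymbol{\theta}}_\text{R}]=\tfrac{F_*}{n}\mathbf{W}_\lambda\mathbf{A}^{-1}\mathbf{W}_\lambda'$.

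The next step is to diagonalise. Write $\mathbf{A}=\sum_{j=1}^m a_j\mathbf{v}_j\mathbf{v}_j'$ with $a_j>0$ (positive definiteness of $\boldsymbol{\Sigma}$), put $t_j=\mathbf{v}_j'\boldsymbol{\theta}^*$ so that $\sum_j t_j^2=\sum_j(\theta_j^*)^2$, and abbreviate $c=\lambda/n$. Using $\I-\mathbf{W}_\lambda=c(\mathbf{A}+c\I)^{-1}$, the ridge bias equals $-c(\mathbf{A}+c\I)^{-1}\boldsymbol{\theta}^*$, and substituting into Proposition \ref{prop_pf_framework_estrisk} yields the closed forms
\[
R_\text{R}(\lambda)=\sum_{j=1}^m\frac{c^2 a_j t_j^2+(F_*/n)\,a_j^2}{(a_j+c)^2},\qquad R_\text{OLS}=R_\text{R}(0)=\frac{m F_*}{n}.
\]
Subtracting and writing the summands over a common denominator,
\[
R_\text{R}(\lambda)-R_\text{OLS}=c\sum_{j=1}^m\frac{c\,a_j t_j^2-(F_*/n)(2a_j+c)}{(a_j+c)^2},
\]
so it suffices to make each numerator negative. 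Since $a_j,c>0$ we have $(F_*/n)(2a_j+c)>(2F_*/n)\,a_j$, hence the $j$-th numerator is negative as soon as $c\,t_j^2<2F_*/n$. If $0<\lambda<\bar{\lambda}$, i.e.\ $c<2F_*/\!\big(n\sum_k t_k^2\big)$, then $c\,t_j^2\le c\sum_k t_k^2<2F_*/n$ for every $j$, and $c>0$; therefore $R_\text{R}(\lambda)<R_\text{OLS}$, as claimed.

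The step I expect to need the most care is the justification of the OLS moments $\E_\mathcal{T}[\hat{\boldsymbol{\theta}}]=\boldsymbol{\theta}^*$ and $\V_\mathcal{T}[\hat{\boldsymbol{\theta}}]=\tfrac{F_*}{n}\mathbf{A}^{-1}$: $\hat{\boldsymbol{\theta}}$ is a nonlinear function of the Wishart matrix $\X'\X$, so it is not literally unbiased with that covariance in a finite sample. These relations are exact in the canonical fixed-design regression and hold asymptotically here, and I would state the caveat explicitly. I would also note two robustness points: (i) the eigenbasis decoupling and the decisive inequality $c\,t_j^2<2F_*/n$ survive for whatever covariance one substitutes consistently into both $R_\text{OLS}$ and $R_\text{R}(\lambda)$, so the shape of the conclusion does not hinge on the particular approximation; and (ii) $\bar{\lambda}=2F_*/\sum_j(\theta_j^*)^2$ is the basis-free, slightly conservative counterpart of the sharper per-component bound $2F_*/\max_j t_j^2$, and it is exactly this form that delivers the reading that ridge improves on the traditional approach over a wide penalty range when $\boldsymbol{\theta}^*$ is well diversified. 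The degenerate case $\boldsymbol{\theta}^*=\mathbf{0}$, where $\bar{\lambda}$ is vacuous, is excluded trivially.
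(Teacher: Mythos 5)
Your argument is correct under the same idealization the paper itself adopts (fixed design, $\E_\mathcal{T}[\hat{\boldsymbol{\theta}}]=\boldsymbol{\theta}^*$, $\V_\mathcal{T}[\hat{\boldsymbol{\theta}}]=\phi^2(\X'\X)^{-1}$ with $\phi^2=F_*$), and it reaches the stated bound $\bar{\lambda}=2F_*/\sum_j(\theta_j^*)^2$ by a genuinely different route. The paper never diagonalises or computes the scalar risk difference: it forms the matrix mean-squared-error difference $\mathbf{M}-\mathbf{M}_\text{R}=\lambda\mathbf{B}^{-1}\{2\phi^2\I+\lambda\phi^2(\X'\X)^{-1}-\lambda\boldsymbol{\theta}^*\boldsymbol{\theta}^{*'}\}\mathbf{B}^{-1}$ with $\mathbf{B}=\X'\X+\lambda\I$, and cites Theobald (1974) both for the equivalence between positive semidefiniteness of $\mathbf{M}_1-\mathbf{M}_2$ and the ordering of the $\mathbf{A}$-weighted risks, and for the fact that this matrix is positive definite precisely when $\lambda<2\phi^2/\boldsymbol{\theta}^{*'}\boldsymbol{\theta}^*$. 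That matrix-level argument buys something your scalar computation does not: positive definiteness of $\mathbf{M}-\mathbf{M}_\text{R}$ gives $\text{tr}(\mathbf{A}(\mathbf{M}-\mathbf{M}_\text{R}))\geq 0$ for \emph{every} positive semidefinite weighting matrix $\mathbf{A}$, so the paper does not need to identify the population $\mathbf{A}=\boldsymbol{\Sigma}+\boldsymbol{\mu}\boldsymbol{\mu}'$ with the sample Gram matrix $\tfrac{1}{n}\X'\X$. Your eigenbasis decoupling, by contrast, requires the weighting matrix and the Gram matrix to be simultaneously diagonalisable, which is why you set them equal; your robustness remark (i) papers over this, since if the population $\mathbf{A}$ and $\tfrac{1}{n}\X'\X$ do not commute the variance term no longer splits across eigencomponents. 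What your approach buys in exchange is self-containedness (no appeal to Theobald's theorem), an explicit closed form for $R_\text{R}(\lambda)$, and the observation that the sharper per-component bound $2F_*/\max_j t_j^2$ would suffice. You are also right, and more candid than the paper, that the OLS moments are the fragile step: in this regression the ``error'' $\bar{r}-\x'\boldsymbol{\theta}^*$ is a function of $\x$ itself, so exact unbiasedness and the stated covariance are a fixed-design idealization rather than a literal finite-sample fact.
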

\par Proposition \ref{prop_pf_framework_Ridge} provides an intuitive theoretical result that has practical relevance. If the optimal portfolio has low return and a high standard deviation ($F_*$ is large) and/or if the optimal portfolio is diversified (low $\sum_{j=1}^m (\theta_j^*)^2$), it is more likely that Ridge regression may obtain a penalty value that will outperform the traditional approach in terms of estimation risk. 

\par Lasso regression offers a different way of shrinking the asset positions by specifying the penalty as $P(\boldsymbol{\theta})=\sum_{j=1}^m |\theta_j|$. The nature of the shrinkage is not as straightforward as for Ridge regression, since the penalty is non-smooth with no closed form solution for the weights. However, the non-smoothness gives Lasso an asset selection property. In practice, Lasso may set several of the $\theta_j$'s equal to zero so that the estimated portfolio will be based on a small subset of the assets. Thus, Lasso is a computationally cheap method close in spirit to best subset selection. In the case of orthogonal returns ($\X'\X=\mathbf{I}$) there exists a simple relationship between the traditional approach and Ridge and Lasso
\begin{equation}
\hat{\theta}_{q,j} = \begin{cases}
\hat{\theta}_j/(1+\lambda) &\text{ if Ridge}\\
\text{sign}(\hat{\theta}_j)(|\hat{\theta}_j|-\lambda)_+ &\text{ if Lasso}
\end{cases}
\end{equation}
where $\text{sign}(\hat{\theta}_j)$ is used to denote the sign of the traditional estimate and $(|\hat{\theta}_j|-\lambda)_+$ outputs the difference $|\hat{\theta}_j|-\lambda$ if it is positive and zero otherwise. In both cases the traditional weight estimates are shrunk towards zero, thereby giving a more conservative exposure to the different assets. For Ridge regression, weights are shrunk by the same factor, while for Lasso regression, each asset position is reduced by a constant amount, truncating at zero for assets where $\lambda$ is large enough compared to the traditional estimate. In other words, Ridge regression shrinks asset positions towards zero, but Lasso will in many cases shrink asset positions all the way to zero. As a result, Lasso will tend to outperform Ridge if several assets truly have optimal weights of zero, while Ridge may be the best choice if the optimal portfolio is highly diversified.

\begin{figure}
\centering
\begin{minipage}{1\textwidth}
\begin{subfigure}{0.32\textwidth}
\includegraphics[width=\linewidth]{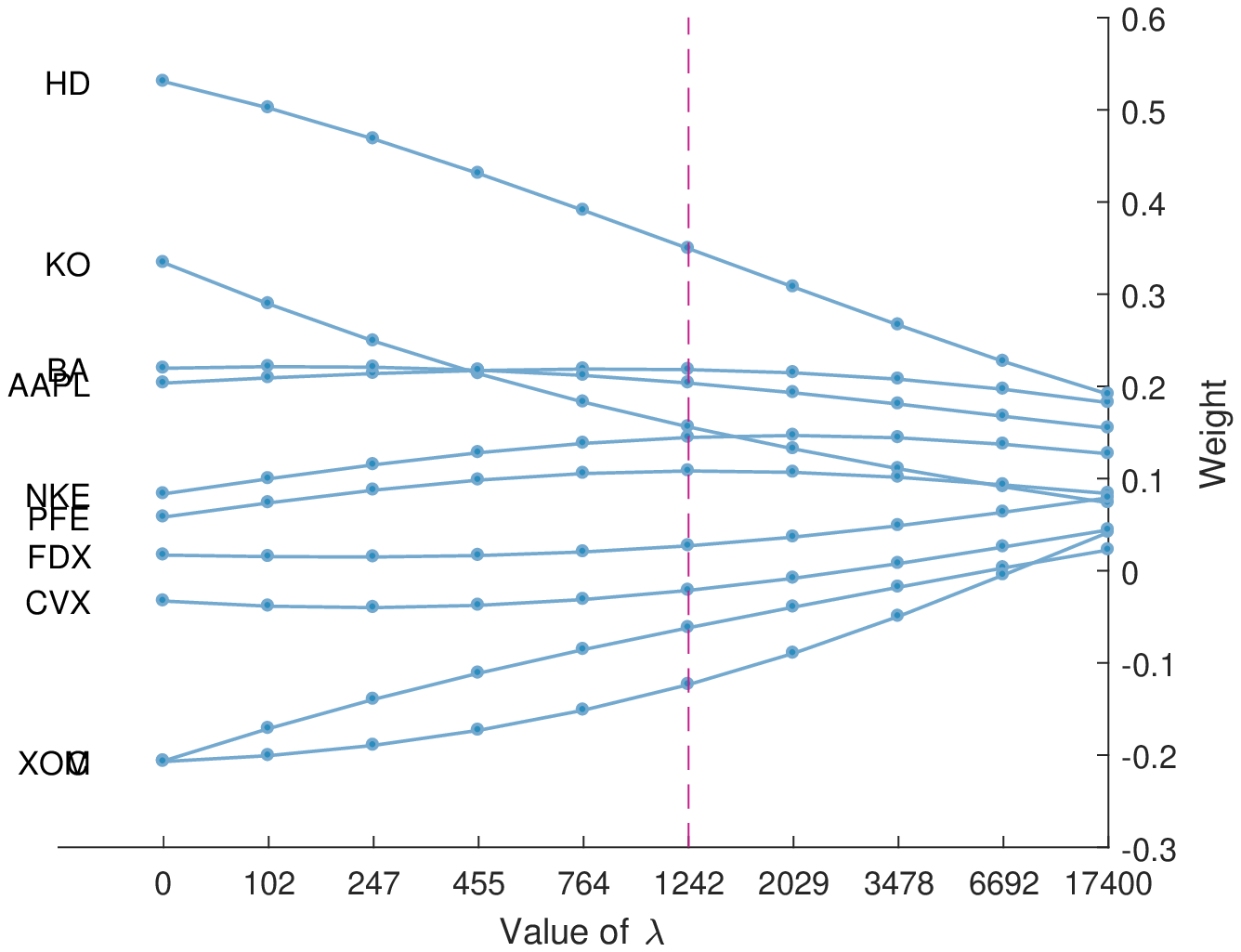}
\caption{Ridge}
\label{fig_pf_regpath_ridge}
\end{subfigure}\hspace*{\fill}
\begin{subfigure}{0.32\textwidth}
\includegraphics[width=\linewidth]{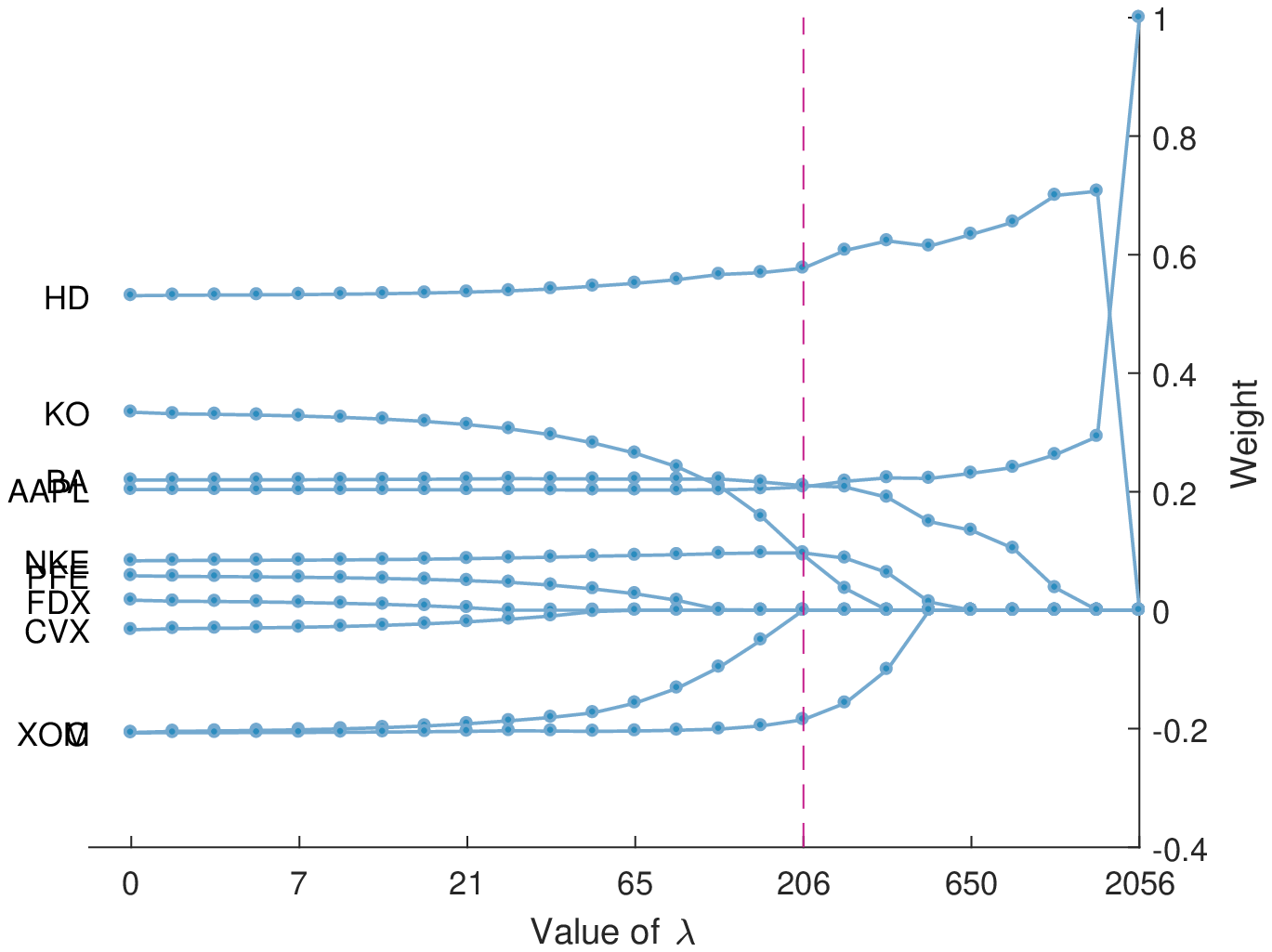}
\caption{Lasso}
\label{fig_pf_regpath_lasso}
\end{subfigure}\hspace*{\fill}
\medskip
\begin{subfigure}{0.32\textwidth}
\includegraphics[width=\linewidth]{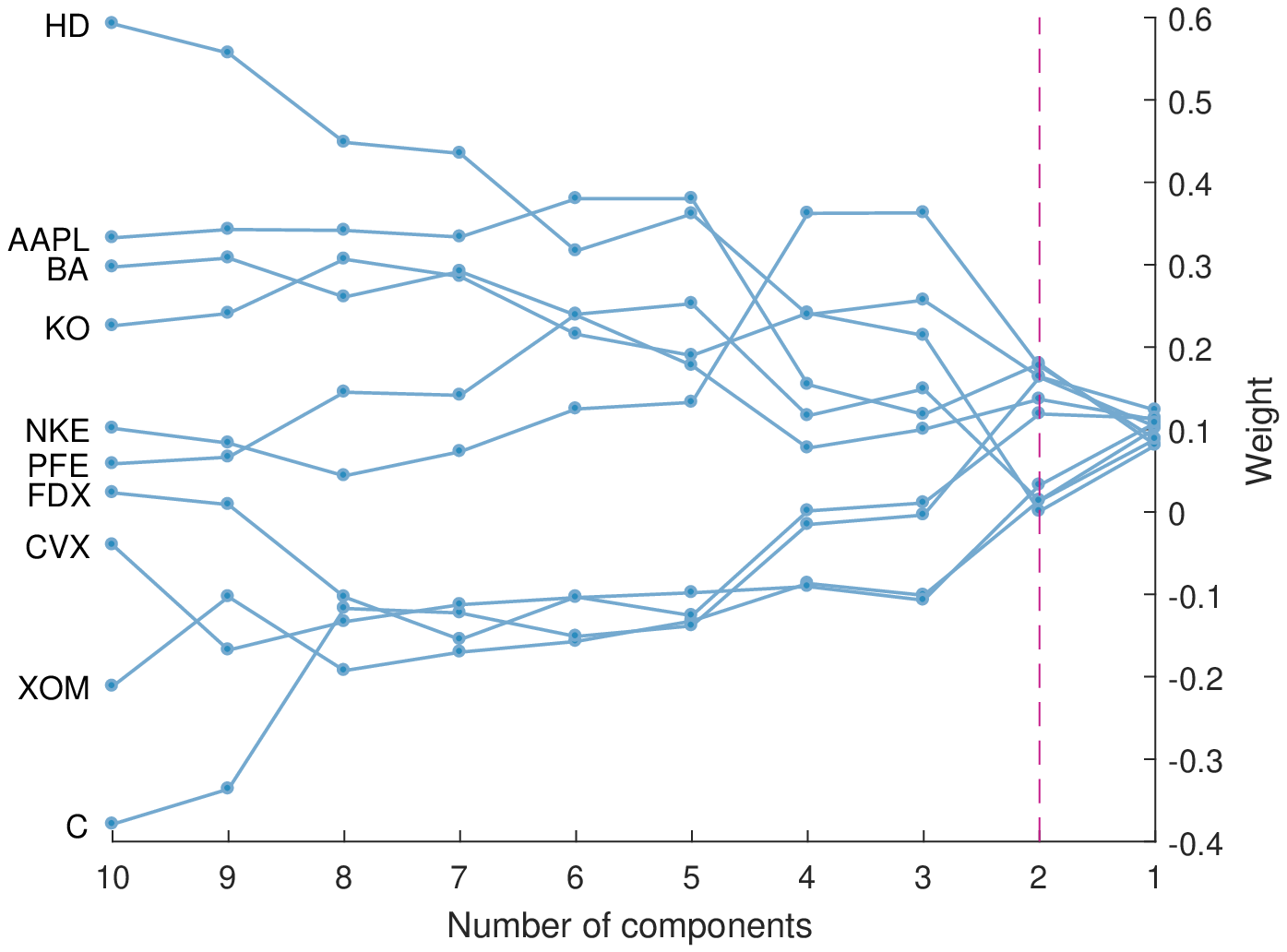}
\caption{PCR}
\label{fig_pf_regpath_pcr}
\end{subfigure}\\
\caption{\footnotesize \textbf{Portfolio regularization paths.} Regularization paths are computed based on monthly returns for 10 stocks from the S\&P500 from August 2012 to December 2017. Figure \ref{fig_pf_regpath_ridge} plots the continuous regularization paths of Ridge regression. Figure \ref{fig_pf_regpath_lasso} shows the case of Lasso regression where some assets are truncated at zero. Figure \ref{fig_pf_regpath_pcr} shows regularization paths for Principal Component regression, where the penalty is the number of principal components included. In each case, the vertical line is the optimal value of the penalty chosen by 5-fold cross validation on the training data. The stocks are Coca-Cola (KO), Apple (AAPL), Exxon (XOM), Citigroup (C), Pfizer (PFE), Boeing (BA), Nike (NKE), Home Depot (HD), FedEx (FDX) and Chevron (CVX).}
\label{fig_pf_regpath}
\end{minipage}
\end{figure}

\par The nature of Ridge and Lasso shrinkage is illustrated in Figure \ref{fig_pf_regpath_ridge} and \ref{fig_pf_regpath_lasso}, respectively. Each method was applied to monthly returns for $m=10$ stocks from the S\&P500 observed for a total of $n=61$ months from 2012-2017. The figures report the estimated relative portfolio weights for varying values of the tuning parameter $\lambda$. In each case, the traditional approach estimates correspond to the values where $\lambda=0$. The estimated portfolio weights from Ridge regression follow a continuous path starting at the traditional estimates, moving towards equal weights as $\lambda$ increases. The vertical line indicates the optimal value of the penalty chosen based on 5-fold cross validation. For Lasso the shrinkage is similar, but not continuous. However for Lasso half of the assets (Citigroup, Exxon, Chevron, FedEx and Pfizer) have estimated portfolio weights equal to zero at the optimal penalty level.

\subsection{Principal Component Regression}\label{sec_pf_methods_pcr}
Principal Component regression (PCR) offers a different approach to the estimation risk problem. We may think of the full set of assets in $\X$ as a sample generated from some underlying lower dimensional data generating process, e.g. based on macroeconomic fundamentals or industry-specific factors. The idea of PCR is to base the portfolio weights on the full set of assets, but only the variation in these assets that can be attributed to the underlying factors. The low dimensional components summarise most of the variation in the return data, but is less subject to estimation risk because of the dimensionality reduction. 
 
\par It can be shown that finding a $\lambda$-dimensional subspace of the data that explains most of the original return data (i.e. maximizes the variance of the return data) is the same as obtaining the first $\lambda$ eigenvectors, or principal components, of the data. PCR then proceed by projecting each asset onto the lower dimensional space using the principal components, before using linear regression on the reduced data to obtain the asset weights.

\par PCR can be implemented as follows. Let $\mathbf{P}$ denote the $m\times m$ matrix with the principal components of $\X$ stored in each column. Let $\mathbf{P}_\lambda$ denote the $m \times \lambda$ matrix where only the first $\lambda$ principal components are included. Each asset is projected onto the lower dimensional space using $\X_\lambda=\X\mathbf{P}_\lambda$, and the low-dimensional representation of the portfolio is obtained from 
\begin{equation}
\boldsymbol{\gamma} = (\X_\lambda'\X_\lambda)^{-1}\X_\lambda'\y
\end{equation}
where $\boldsymbol{\gamma}$ is the $\lambda\times 1$ vector of principal component weights and $\y=\mathbf{1}\bar{r}$. The optimal portfolio weights are then estimated as $\hat{\boldsymbol{\theta}}_\text{P} = \mathbf{P}_\lambda \boldsymbol{\gamma}$. Similar to the regularization approaches discussed in Section \ref{sec_pf_methods_ridgelasso}, the number of principal components $\lambda$ can be chosen based on cross-validation.

\par PCR is related to (\ref{eq_pf_framework_ml_lambda}) as follows. By basing the asset positions on the top $\lambda$ principal components, it is implied that the solution will be orthogonal to the bottom $m-\lambda$ principal components. Thus, PCR uses the penalty $P(\boldsymbol{\theta}) = \mathbf{P}_{-\lambda}'\boldsymbol{\theta}=\zero$, where $\mathbf{P}_{-\lambda}$ denotes the bottom $m-\lambda$ eigenvectors. 

\begin{figure}
\centering
\begin{minipage}{1\textwidth}
\begin{subfigure}{0.50\textwidth}
\includegraphics[width=\linewidth]{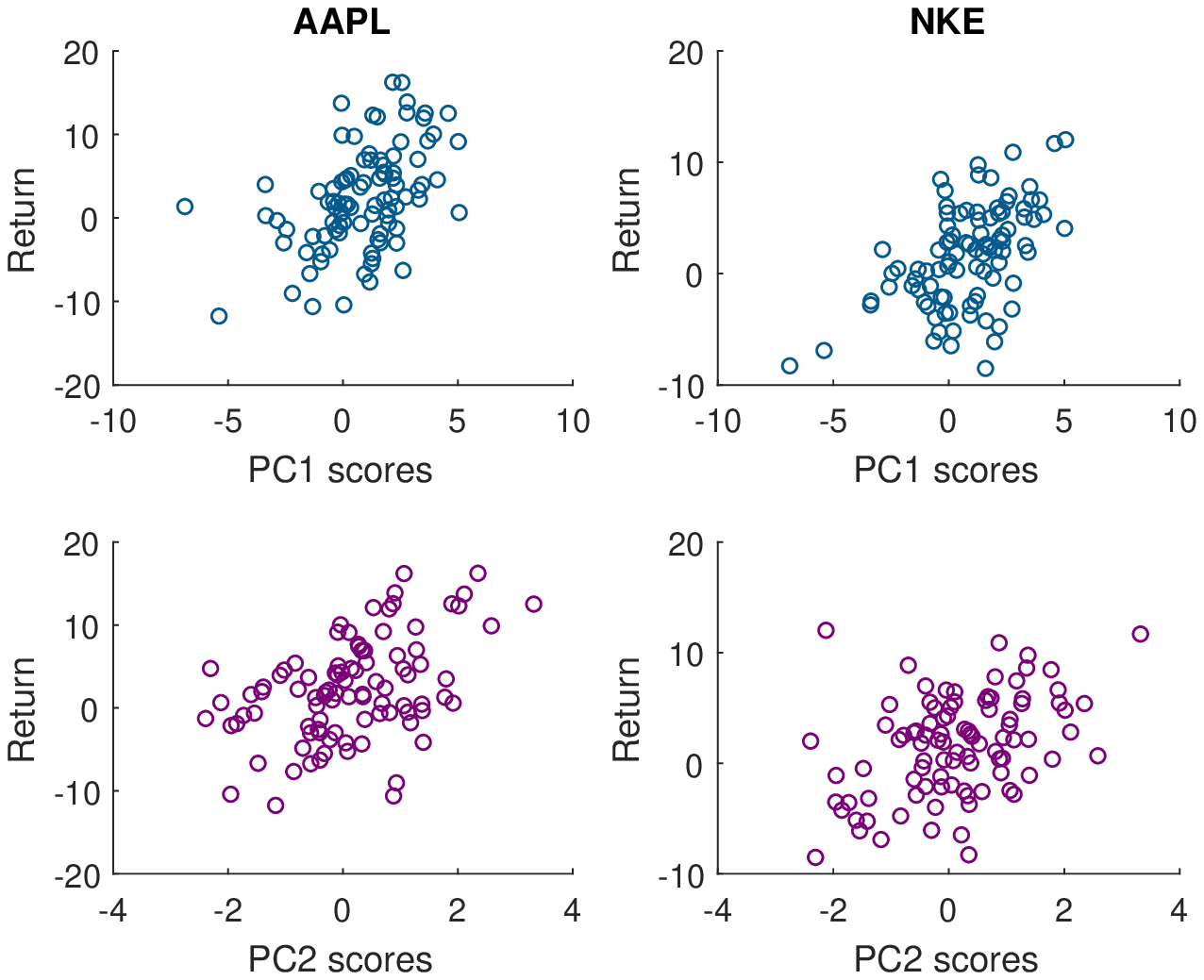}
\caption{Apple and Nike}
\label{fig_pf_pcr_1}
\end{subfigure}\hspace*{\fill}
\begin{subfigure}{0.50\textwidth}
\includegraphics[width=\linewidth]{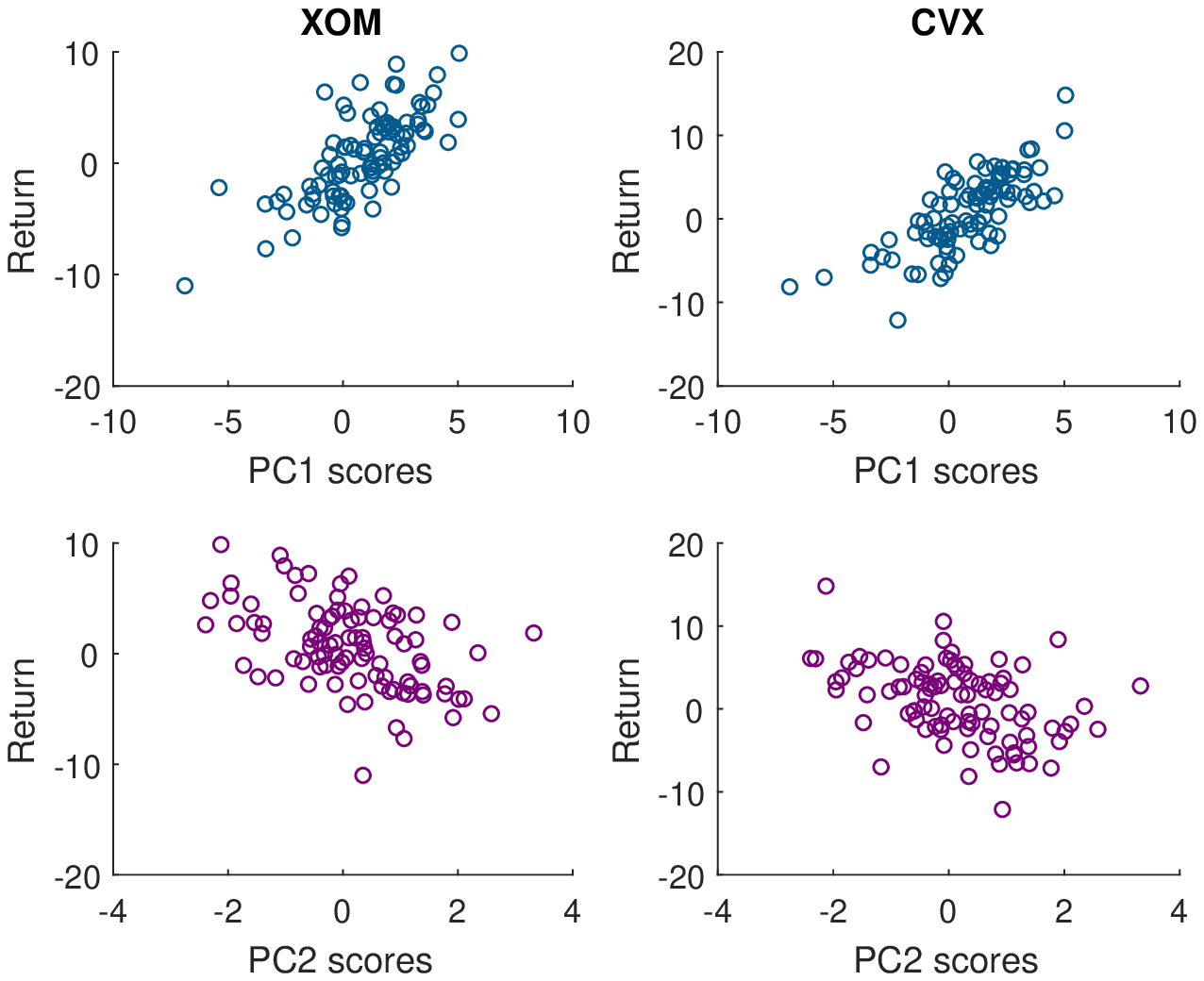}
\caption{Exxon and Chevron}
\label{fig_pf_pcr_2}
\end{subfigure}
\caption{\footnotesize \textbf{Principal component analysis.} Principal component analysis of monthly returns for 10 stocks from the S\&P500 from August 2012 to December 2017. Figure \ref{fig_pf_pcr_1} plots the first and second principal component scores (first and second column of $\X_\lambda$) versus the return data for Apple (AAPL) and Nike (NKE). Figure \ref{fig_pf_pcr_2} shows a similar plot for Exxon (XOM) and Chevron (CVX).}
\label{fig_pf_pcr}
\end{minipage}
\end{figure}

\par Continuing the S\&P500 illustration, the two first principal components of the return data are plotted against the return of a selection of stocks in Figure \ref{fig_pf_pcr}. There is a clear indication of a positive relationship between the first principal component and the return for Apple, Nike, Exxon and Chevron, while there is a negative relationship between the return on the oil companies and the second principal component. The regularization path of PCR may be derived in a similar fashion to that of Ridge and Lasso. Figure \ref{fig_pf_regpath_pcr} shows how the portfolio weights change by reducing the number of principal components from the maximum of 10 down to only a single component. Using 5-fold cross-validation, the optimal number of principal components is 2 in this case. With two principal components worth of information, the relative portfolio weights are similar for all assets, ranging from close to zero up to 0.2.

\subsection{Spike and Slab Regression}\label{sec_pf_methods_ss}
In the Bayesian context, a common way of introducing sparsity in regression models is by using a Spike and Slab prior.\footnote{The Spike and Slab method is explained in e.g. \cite{mitchell1988bayesian}, \cite{george1997approaches} and \cite{scott2014predicting}.} This approach may be viewed as a Bayesian method for approximating the best subset problem discussed in Section \ref{sec_pf_framework_sample}. The best subset idea is implemented using a $m$ dimensional vector $\boldsymbol{\eta}$, where each element $\eta_j=1$ if $\theta_j\neq 0$ and $\eta_j=0$ if $\theta_j=0$. This vector summarises which assets are included in the portfolio. The objective is to estimate the posterior distribution of $\boldsymbol{\eta}$ and the corresponding asset positions $\boldsymbol{\theta}_{\boldsymbol{\eta}}$, where the subscript indicates that the portfolio weight vector only includes the assets where $\eta_j=1$. The likelihood of the regression model is given by $\N(\X_{\boldsymbol{\eta}}\boldsymbol{\theta}_{\boldsymbol{\eta}},\phi^2\I)$ where, in addition to the parameters $\boldsymbol{\theta}$ and $\phi^2$, the asset inclusion vector $\boldsymbol{\eta}$ is unknown. The Spike and Slab prior is
\begin{equation}
p(\boldsymbol{\theta},\phi^2,\boldsymbol{\eta}) = p(\boldsymbol{\theta}_{\boldsymbol{\eta}}|\phi^2,\boldsymbol{\eta})p(\phi^2|\boldsymbol{\eta})p(\boldsymbol{\eta})
\end{equation}
where $p(.)$ is used as a generic notation for densities. The prior for the asset inclusion vector $p(\boldsymbol{\eta})$ is the ``Spike'' and $p(\boldsymbol{\theta}_{\boldsymbol{\eta}}|\phi^2,\boldsymbol{\eta})$ is the ``Slab''. It is common to assume a Bernoulli distribution for the Spike
\begin{equation}\label{eq_pf_methods_ss_prior_eta}
\boldsymbol{\eta} \sim \prod_{j=1}^m \pi_j^{\eta_j}(1-\pi_j)^{1-\eta_j}
\end{equation} 
where $\pi_j$ denotes the inclusion probability of asset $j$. The uniform prior assumes that each asset has an equal chance of being included or excluded, i.e. $\pi_j=1/2$ for each $j$. In some settings it could be useful to alter this specification, by specifically excluding or including assets by setting $\pi_j=0$ or $\pi_j=1$, respectively. Another alternative is to specify an expected portfolio size, $k$, and then take $\pi_j=k/m$. I will use the uniform prior in this paper. The remaining priors are specified as
\begin{eqnarray}
\boldsymbol{\theta}_{\boldsymbol{\eta}}|\phi^2,\boldsymbol{\eta} & \sim & \N(\boldsymbol{\theta}_{\boldsymbol{\eta}}^0,\phi^2 \mathbf{V}_{\boldsymbol{\eta}}^0)\label{eq_pf_methods_ss_prior_theta}\\
\phi^2|\boldsymbol{\eta} & \sim & \mathcal{G}^{-1}(a^0,b^0)\label{eq_pf_methods_ss_prior_sigma}
\end{eqnarray}
where $\N$ is the normal distribution with prior mean $\boldsymbol{\theta}_{\boldsymbol{\eta}}^0$ and variance $\phi^2\mathbf{V}_{\boldsymbol{\eta}}^0$, and $\mathcal{G}^{-1}$ is the inverse Gamma distribution with shape $a^0$ and scale $b^0$. The prior is illustrated in Figure \ref{fig_pf_ss_prior}, where the Spike ensures probability mass at zero, while the Slab distributes probability mass to a broad set of values for the portfolio weight. Using the likelihood, (\ref{eq_pf_methods_ss_prior_eta}), (\ref{eq_pf_methods_ss_prior_theta}) and (\ref{eq_pf_methods_ss_prior_sigma}) it can be shown, see e.g. \cite{murphy2012machine}, that the posteriors of $\boldsymbol{\theta}$ and $\phi^2$ conditional on the included assets $\boldsymbol{\eta}$ are given by
\begin{equation}\label{eq_pf_methods_ss_post_theta_sigma}
\boldsymbol{\theta}_{\boldsymbol{\eta}}|\phi^2,\boldsymbol{\eta},\y \sim \N(\boldsymbol{\theta}_{\boldsymbol{\eta}}^1,\phi^2\mathbf{V}_{\boldsymbol{\eta}}^1) \hspace{1cm} \text{and} \hspace{1cm}\phi^2|\boldsymbol{\eta},\y \sim \mathcal{G}^{-1}(a^1,b^1) 
\end{equation}
where the posterior asset mean is $\boldsymbol{\theta}_{\boldsymbol{\eta}}^1 = \mathbf{V}_{\boldsymbol{\eta}}^1(\X_{\boldsymbol{\eta}}'\y + (\mathbf{V}_{\boldsymbol{\eta}}^0)^{-1} \boldsymbol{\theta}_{\boldsymbol{\eta}}^0)$ and the variance is $\mathbf{V}_{\boldsymbol{\eta}}^1 = (\X_{\boldsymbol{\eta}}'\X_{\boldsymbol{\eta}} + (\mathbf{V}_{\boldsymbol{\eta}}^0)^{-1})^{-1}$. The posterior shape is $a^1 = a^0 + \frac{1}{2}n$ and the scale is $b^1 = b^0 + \frac{1}{2}(\y'\y + \boldsymbol{\theta}_{\boldsymbol{\eta}}^{0'} (\mathbf{V}_{\boldsymbol{\eta}}^0)^{-1}\boldsymbol{\theta}_{\boldsymbol{\eta}}^0 - \boldsymbol{\theta}_{\boldsymbol{\eta}}^{1'}(\mathbf{V}_{\boldsymbol{\eta}}^1)^{-1}\boldsymbol{\theta}_{\boldsymbol{\eta}}^1)$. Given the above, integrating out $\boldsymbol{\theta}_{\boldsymbol{\eta}}$ and $\phi^2$ gives a closed form expression for the posterior of the inclusion variable
\begin{equation}\label{eq_pf_methods_ss_post_eta}
p(\boldsymbol{\eta}|\y) \propto \frac{1}{(2\pi)^{n/2}} \frac{|\mathbf{V}_{\boldsymbol{\eta}}^1|^{1/2}}{|\mathbf{V}_{\boldsymbol{\eta}}^0|^{1/2}} \frac{\Gamma(a^1)}{\Gamma(a^0)} \frac{(b^0)^{a^0}}{(b^1)^{a^1}} \prod_{j=1}^m \pi_j^{\eta_j} (1-\pi_j)^{1-\eta_j}
\end{equation}
Using the traditional approach estimate for the included assets, $\hat{\boldsymbol{\theta}}_{\boldsymbol{\eta}}=(\X_{\boldsymbol{\eta}}'\X_{\boldsymbol{\eta}})^{-1}\X_{\boldsymbol{\eta}}'\y$ and Zellner's g-prior $(\mathbf{V}_{\boldsymbol{\eta}}^0)^{-1}=\frac{g}{n}\X_{\boldsymbol{\eta}}'\X_{\boldsymbol{\eta}}$ it follows directly that
\begin{equation}\label{eq_pf_methods_existing_empbayes}
\boldsymbol{\theta}_{\boldsymbol{\eta}}^1 = \frac{n}{n+g}\hat{\boldsymbol{\theta}}_{\boldsymbol{\eta}} + \frac{g}{n+g}\boldsymbol{\theta}_{\boldsymbol{\eta}}^0
\end{equation}
The posterior mean is therefore a weighted combination of the traditional estimate and the prior, conditional on included assets only.
\begin{algorithm}[t]
\caption{Spike and Slab portfolio selection}
\begin{algorithmic}[1]
\item Set a starting value $\boldsymbol{\eta}^{(0)}=(1,\hdots,1)$
\item For Gibbs sampling iteration $i=1,\hdots,N$
\subitem Set $\boldsymbol{\eta}^{(i)} \leftarrow \boldsymbol{\eta}^{(i-1)}$.
\subitem For each $j=1,\hdots,m$ in random order
\subsubitem Set $\eta_j^{(i)}=1$ if $u<h(\eta_j=1)/(h(\eta_j=1)+h(\eta_j=0))$
\subsubitem where $h(\eta_j=x) = p(\eta_j=x,\boldsymbol{\eta}_{-j}^{(i)}|\y)$ and $u\sim \text{uniform}(0,1)$.
\subitem Draw $(\phi^2)^{(i)}$ from $p(\phi^2|\boldsymbol{\eta}^{(i)},\y)$.
\subitem Draw $\boldsymbol{\theta}_{\boldsymbol{\eta}}^{(i)}$ from $p(\boldsymbol{\theta}_{\boldsymbol{\eta}}|(\phi^2)^{(i)},\boldsymbol{\eta}^{(i)},\y)$.
\end{algorithmic}
\label{tab_pf_methods_ss}
\end{algorithm}

\par The Spike and Slab regression is implemented using Gibbs-sampling as outlined in Algorithm \ref{tab_pf_methods_ss}. As a starting point, all assets are assumed included in the portfolio. From this starting value, (\ref{eq_pf_methods_ss_post_eta}) can be used to obtain the posterior of the included assets. For each asset $j$, switching between inclusion ($\eta_j=1$) and exclusion ($\eta_j=0$), gives two evaluations of the posterior inclusion density that can be used to compute an asset inclusion probability. Drawing a uniform random number between zero and one then determines if the asset should be included or excluded, and the inclusion vector $\boldsymbol{\eta}$ is updated accordingly. Once all assets have been examined in random order, the posterior portfolio vector and the corresponding noise level can be drawn using (\ref{eq_pf_methods_ss_post_theta_sigma}). Repeating this procedure several thousand times gives estimated posterior distributions of the optimal portfolio weights and the inclusion probability of each asset.

\par The results from applying Spike and Slab regression to the illustrative S\&P example are reported in Figure \ref{fig_pf_ss_inclusion} and Figure \ref{fig_pf_ss_est}. The method suggests enforcing sparsity by excluding all assets except Home Depot and Coca Cola. Home Depot is included in close to 100\% of the Gibbs sampling iterations, while Coca Cola is the second most frequent included asset, with an inclusion probability below 10\%. The results are broadly in line with the regularization approaches discussed in Section \ref{sec_pf_methods_ridgelasso}. 

\begin{figure}
\centering
\begin{minipage}{1\textwidth}
\begin{subfigure}{0.33\textwidth}
\includegraphics[width=\linewidth]{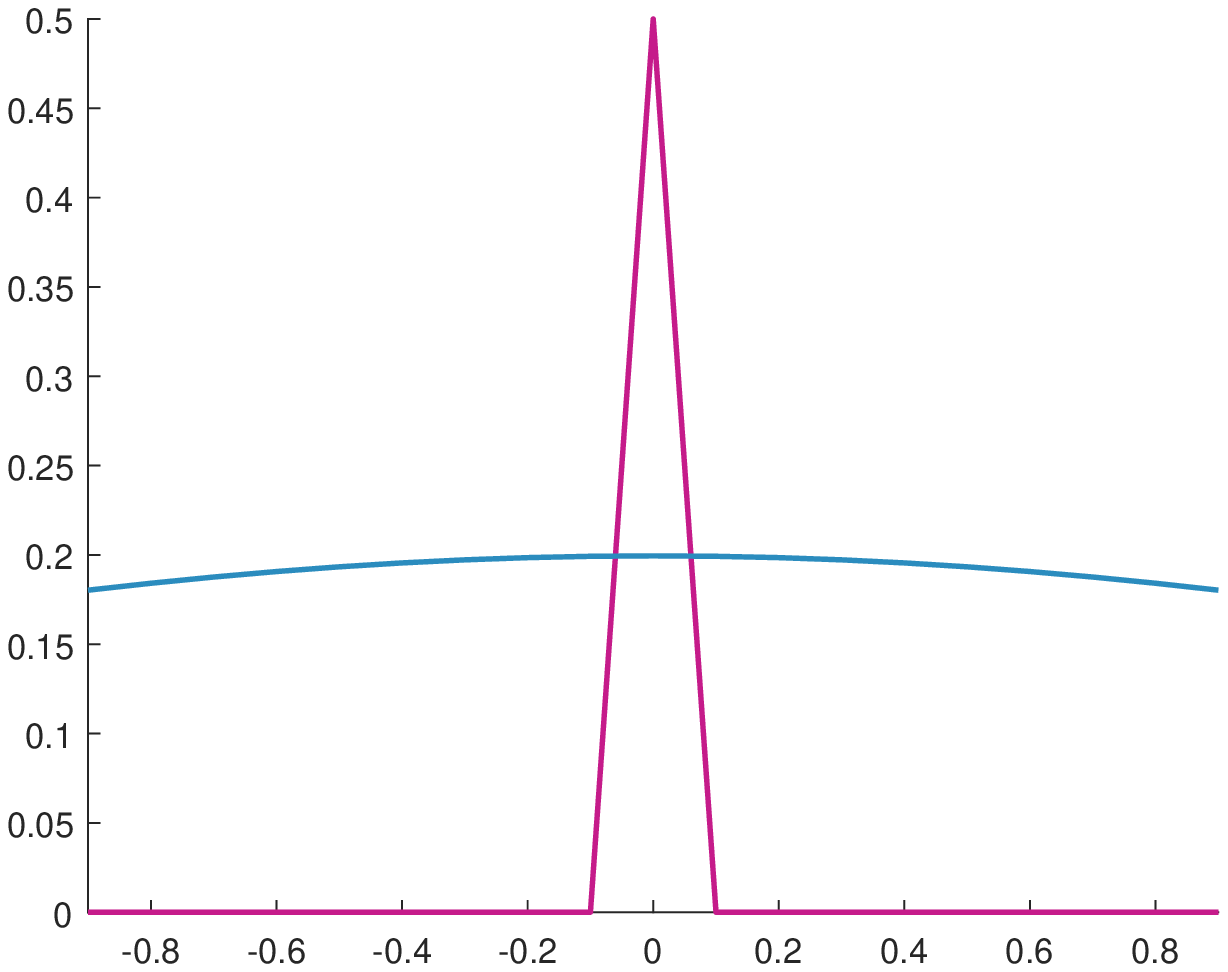}
\caption{Spike and slab}
\label{fig_pf_ss_prior}
\end{subfigure}\hspace*{\fill}
\begin{subfigure}{0.33\textwidth}
\includegraphics[width=\linewidth]{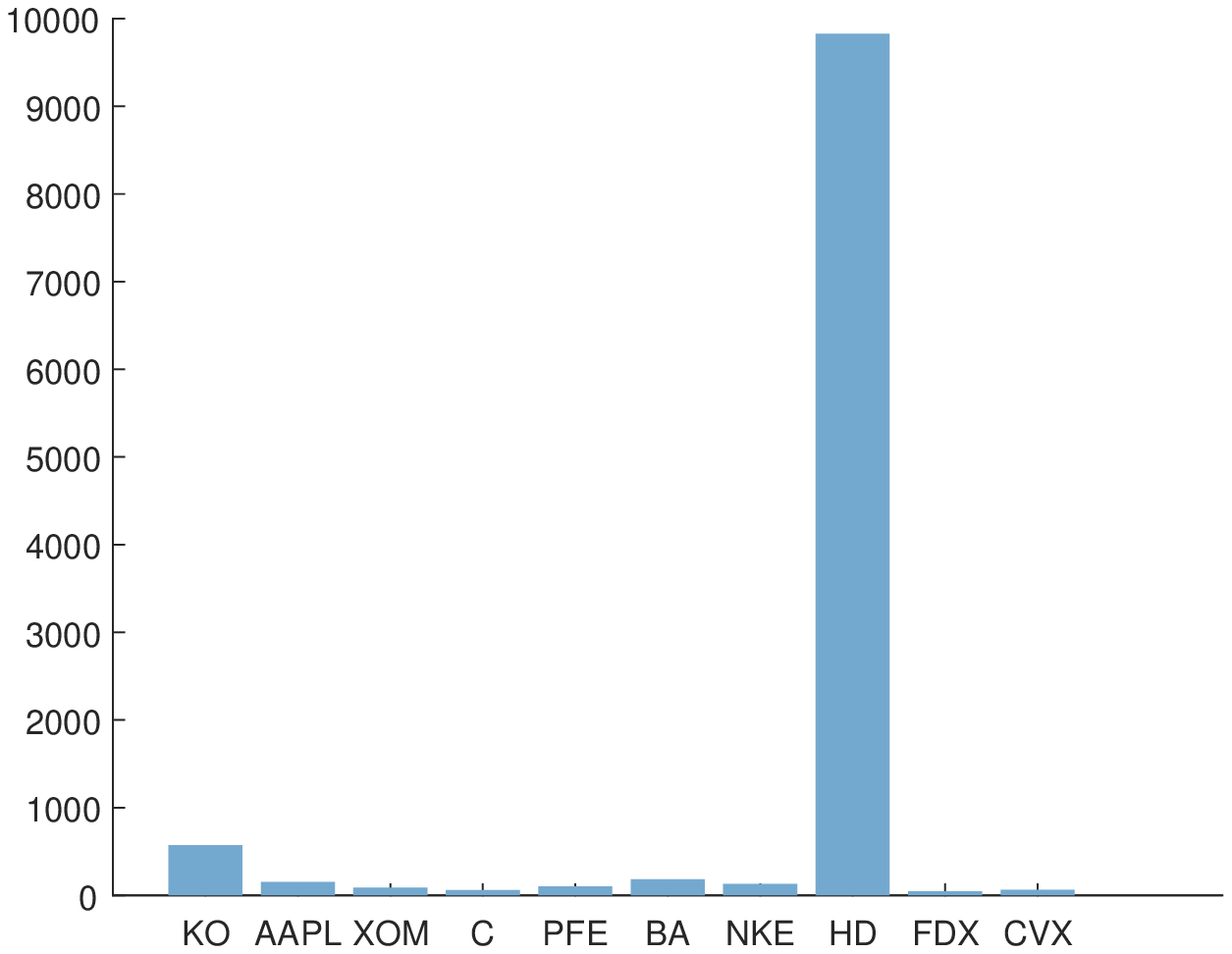}
\caption{Inclusion frequencies}
\label{fig_pf_ss_inclusion}
\end{subfigure}\hspace*{\fill}
\begin{subfigure}{0.33\textwidth}
\includegraphics[width=\linewidth]{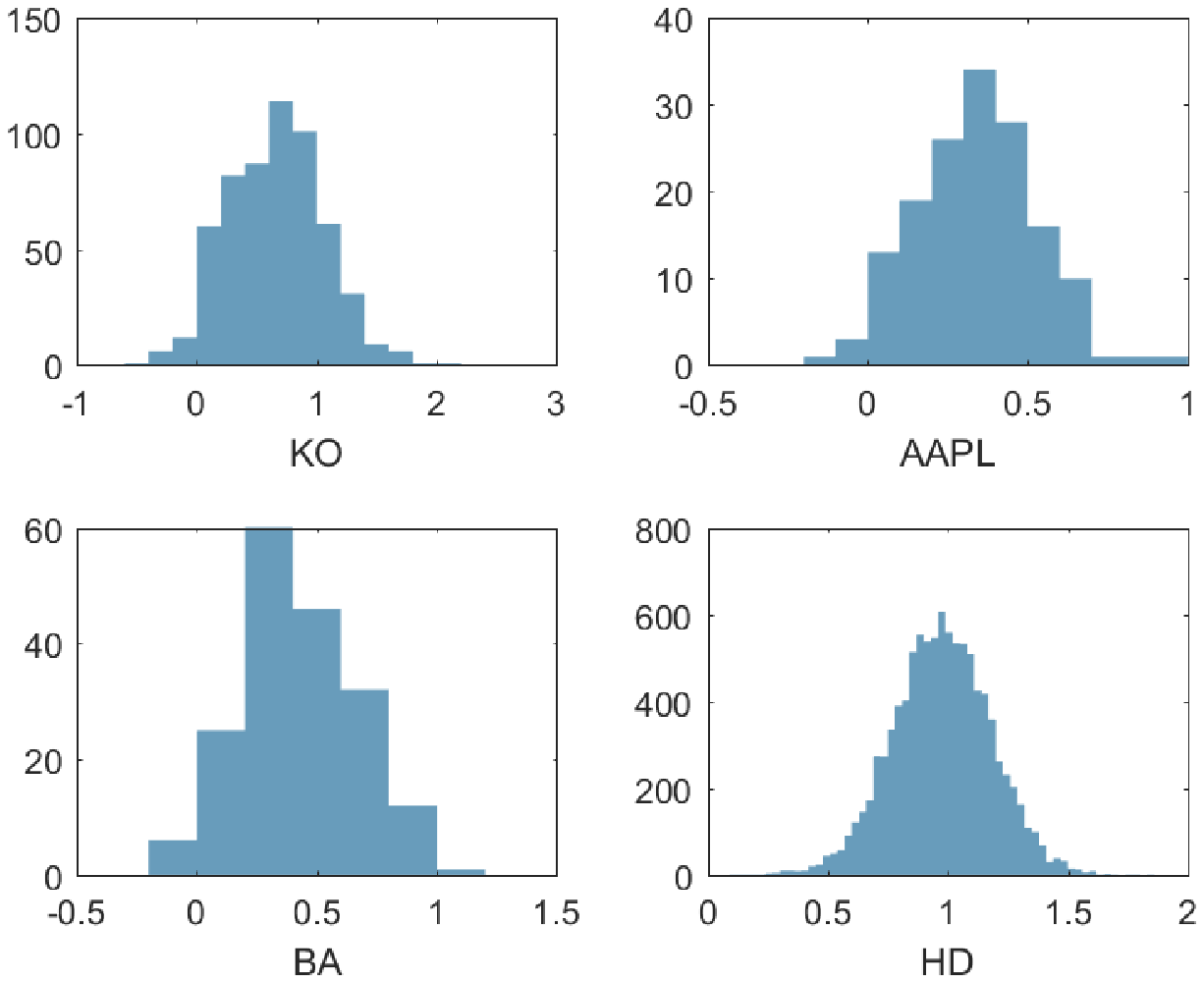}
\caption{Weight estimates}
\label{fig_pf_ss_est}
\end{subfigure}
\caption{\footnotesize \textbf{Spike and Slab portfolio selection.} Based on monthly returns for 10 stocks from the S\&P500 from August 2012 to December 2017. Figure \ref{fig_pf_ss_prior}: Illustrative Spike and Slab prior based on Bernoulli and Gaussian distributions, respectively. Figure \ref{fig_pf_ss_inclusion}-\ref{fig_pf_ss_est}: Results based on 10000 Gibbs sampling iterations after a burn in of 5000 iterations. Figure \ref{fig_pf_ss_inclusion}: Home Depot (HD) is included in almost 100\% of the iterations, with Coca Cola (KO) being the second most frequent asset, included in less than 10\% of the iterations. Figure \ref{fig_pf_ss_est}: Histograms of the portfolio weight estimates for the four most frequent assets. Priors: For the Slab, the mean asset vector was set to $\boldsymbol{\theta}_{\boldsymbol{\eta}}^0 = \zero$ with Zellner's g-prior for the covariance, $(\mathbf{V}_{\boldsymbol{\eta}}^0)^{-1}=\frac{g}{n}\X_{\boldsymbol{\eta}}'\X_{\boldsymbol{\eta}}$ with $g=1$. The shape and scale parameters were set to $a^0=b^0=0.1$, and it was assumed an uninformative prior for the inclusion of assets, $\pi_j=0.5$ for all $j$.}
\label{fig_pf_ss}
\end{minipage}
\end{figure}

\section{Simulation Study}\label{sec_pf_sim}
\par The purpose of this simulation study is to examine the performance of ML methods in reducing estimation risk compared to the traditional approach and various benchmark approaches. The use of simulated data is beneficial for studying the generalisation error, estimation risk and bias-variance tradeoff of the various methods, as these measures depend on repeated draws of data. I use simulated data calibrated to the US stock market in order to capture returns, variances and covariances that are similar to actual US data.  

\par I randomly choose a subset of assets from the Standard \& Poor's 500 Index (S\&P500) for the period 2012-2017, and compute monthly excess returns for each asset. I define the mean and covariance of these returns as population values $\boldsymbol{\mu}$ and $\mathbf{\Sigma}$, respectively. Using this synthetic S\&P population, I draw a dataset of returns and estimate the optimal weights using all strategies presented in Section \ref{sec_pf_methods}. To assess estimation risk, each strategy is estimated on repeated draws of the training data and evaluated out of sample at the population values. The above procedure is carried out for a varying number of assets and sample sizes.

\par To be specific, given a number of assets $m$ and a number of returns $n$, I draw $K$ training datasets $\X_k\sim\N(\boldsymbol{\mu},\boldsymbol{\Sigma})$ for $k=1,\hdots,K$, where each $\X_k$ is a $n\times m$ matrix of excess returns. The estimated optimal portfolio weights for strategy $q$ and data $k$ is denoted by $\hat{\boldsymbol{\theta}}_{qk}$. The estimation risk (\ref{eq_pf_framework_estrisk}) of strategy $q$ is approximated by
\begin{equation}\label{eq_pf_sim_risk}
\hat{R}_q = (\boldsymbol{\theta}^*-\bar{\boldsymbol{\theta}}_q)'\mathbf{A}(\boldsymbol{\theta}^*-\bar{\boldsymbol{\theta}}_q) + \text{tr}\left(\mathbf{A}\mathbf{S}_q\right)
\end{equation}
where $\bar{\boldsymbol{\theta}}_q = \frac{1}{K}\sum_{k=1}^K \hat{\boldsymbol{\theta}}_{qk}$ and $\mathbf{S}_q=\frac{1}{K-1} \sum_{k=1}^K (\hat{\boldsymbol{\theta}}_{qk}-\bar{\boldsymbol{\theta}}_q)(\hat{\boldsymbol{\theta}}_{qk}-\bar{\boldsymbol{\theta}}_q)'$. The above formulation enables a study of the bias and variance of the estimated weights under strategy $q$ in repeated samples. Additionally, I compare methods using the average out of sample Sharpe ratio across datasets
\begin{equation}\label{eq_pf_sim_sharpe}
\hat{s}_q = \frac{1}{K}\sum_{k=1}^K \frac{\boldsymbol{\mu}'\hat{\boldsymbol{\theta}}_{qk}}{\sqrt{\hat{\boldsymbol{\theta}}_{qk}'\boldsymbol{\Sigma}\hat{\boldsymbol{\theta}}_{qk}}}
\end{equation}
Each strategy was implemented as follows. The optimal population solution (\ref{eq_pf_framework_thetaPop}) was computed directly using the population moments. Using $K=100$ datasets of returns from the population, the traditional approach, Ridge, Lasso, PCR and Spike and Slab was applied to each dataset. For Ridge, Lasso and PCR, 5-fold cross validation was used to choose the penalty parameter $\lambda$. Spike and Slab was implemented using a zero mean prior $\boldsymbol{\theta}_{\boldsymbol{\eta}}^0=\zero$ and Zellner's g-prior for the covariance $(\mathbf{V}_{\boldsymbol{\eta}}^0)^{-1} = \frac{q}{n}(\X_{\boldsymbol{\eta}}'\X_{\boldsymbol{\eta}})$ with $g=1$. Furthermore, I use $a^0=b^0=0.1$ and the uninformative asset prior $\pi_j=0.5$ for all $j$. As benchmarks, I use the equal weight strategy, the optimal mean-variance portfolio with a short sale restriction, the minimum variance portfolio and Empirical Bayes proposed by \cite{jorion1986bayes}.

\par The Sharpe ratio for the traditional strategy, the ML strategies and the benchmarks are reported for varying sample and portfolio sizes in Table \ref{tab_pf_sim_sharpe}. First, note that the Sharpe ratios obtained from the traditional approach are substantially lower than the population values for most cases up to 120 months. In particular, using the traditional approach for the portfolio of 10 assets gives a Sharpe ratio as low as 0.25 based on 20 months, substantially below the population value 0.62. In the case of 50 assets and 120 months, the Sharpe ratio is 1.31 compared to 2.05. Furthermore, in all cases where the number of assets is larger than the number of observed returns, the traditional strategy cannot be used due to a degenerate covariance matrix. These observations document the already well established result that the traditional approach is inefficient. As expected, the traditional approach converges to the population Sharpe ratio as the number of observations increases.

\par The second observation is that all ML algorithms yield similar results for the Sharpe ratio, well above the traditional approach up to 60 months, and similar to the traditional approach from 120 months onwards. Even in highly degenerate cases with 50 assets and 20 returns, the ML strategies all yield reasonably high Sharpe ratios. The reason is that in small samples, the mean and covariance of the assets will be highly unstable across folds, thus making it optimal to set high penalty values. This leads to less variability in the weights and thus also less variability in the portfolio mean and standard deviation across datasets. As the number of observations gets large, the sample mean and covariance of asset returns will be similar to the population moments in all subsamples. In that case the ML strategies find it optimal to set the penalty level low or to zero. Thus, all ML strategies relying on cross-validation will converge to the traditional approach as the number of observations increases. 

\begin{figure}
\centering
\begin{minipage}{1\textwidth}
\begin{subfigure}{0.32\textwidth}
\includegraphics[width=\linewidth]{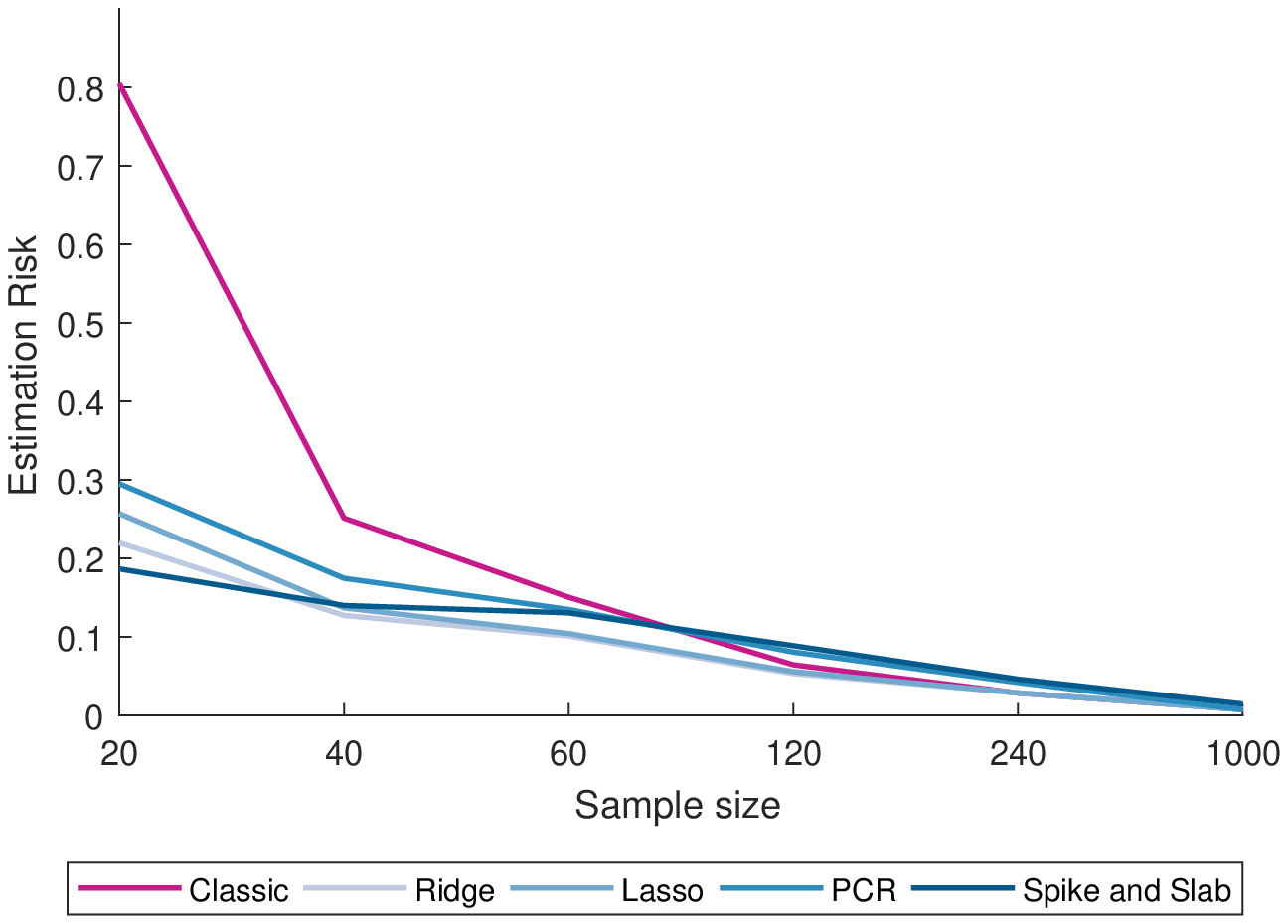}
\caption{Estimation risk}
\label{fig_pf_sim_risk_full}
\end{subfigure}\hspace*{\fill}
\begin{subfigure}{0.32\textwidth}
\includegraphics[width=\linewidth]{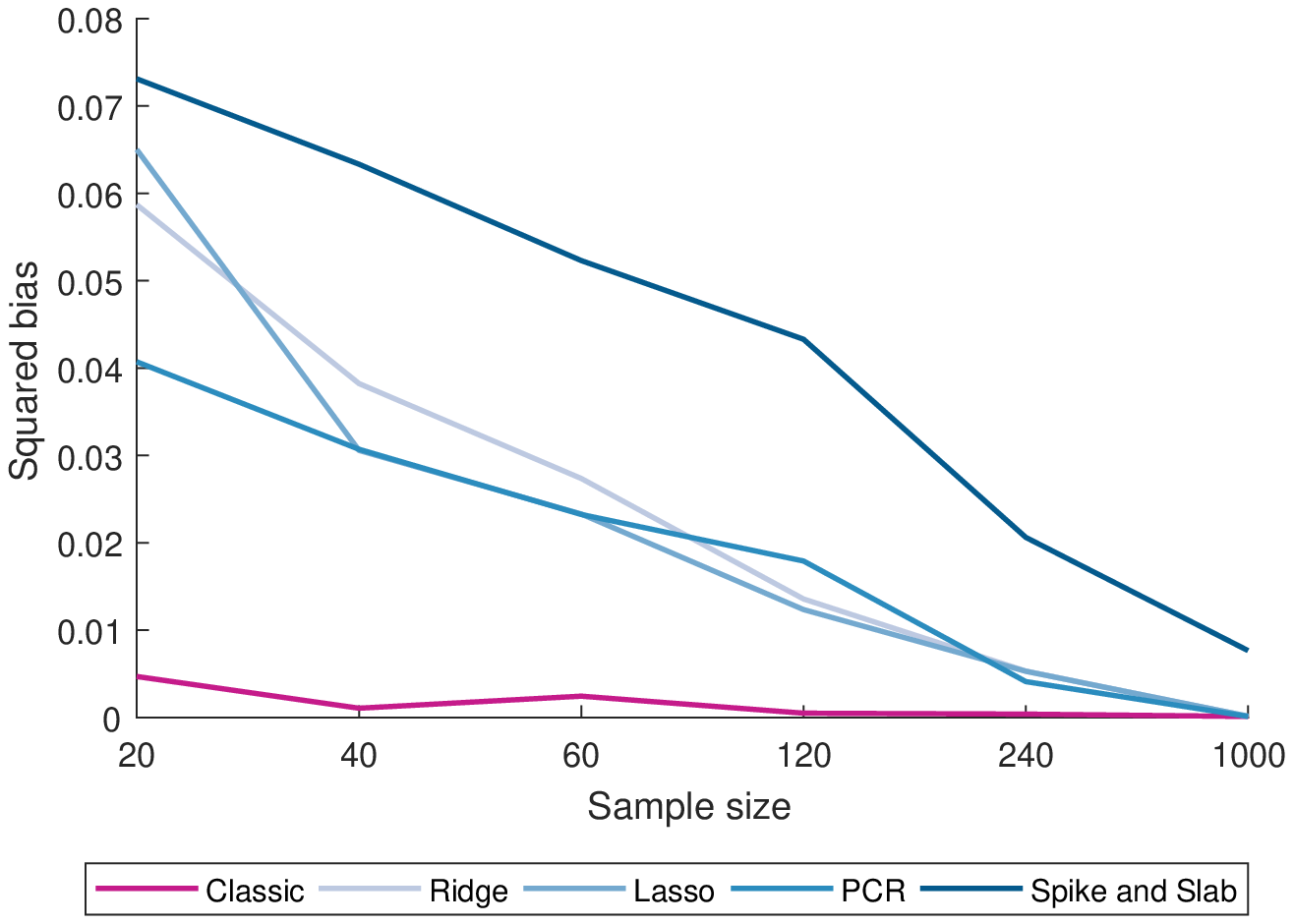}
\caption{Squared bias}
\label{fig_pf_sim_risk_bias}
\end{subfigure}\hspace*{\fill}
\medskip
\begin{subfigure}{0.32\textwidth}
\includegraphics[width=\linewidth]{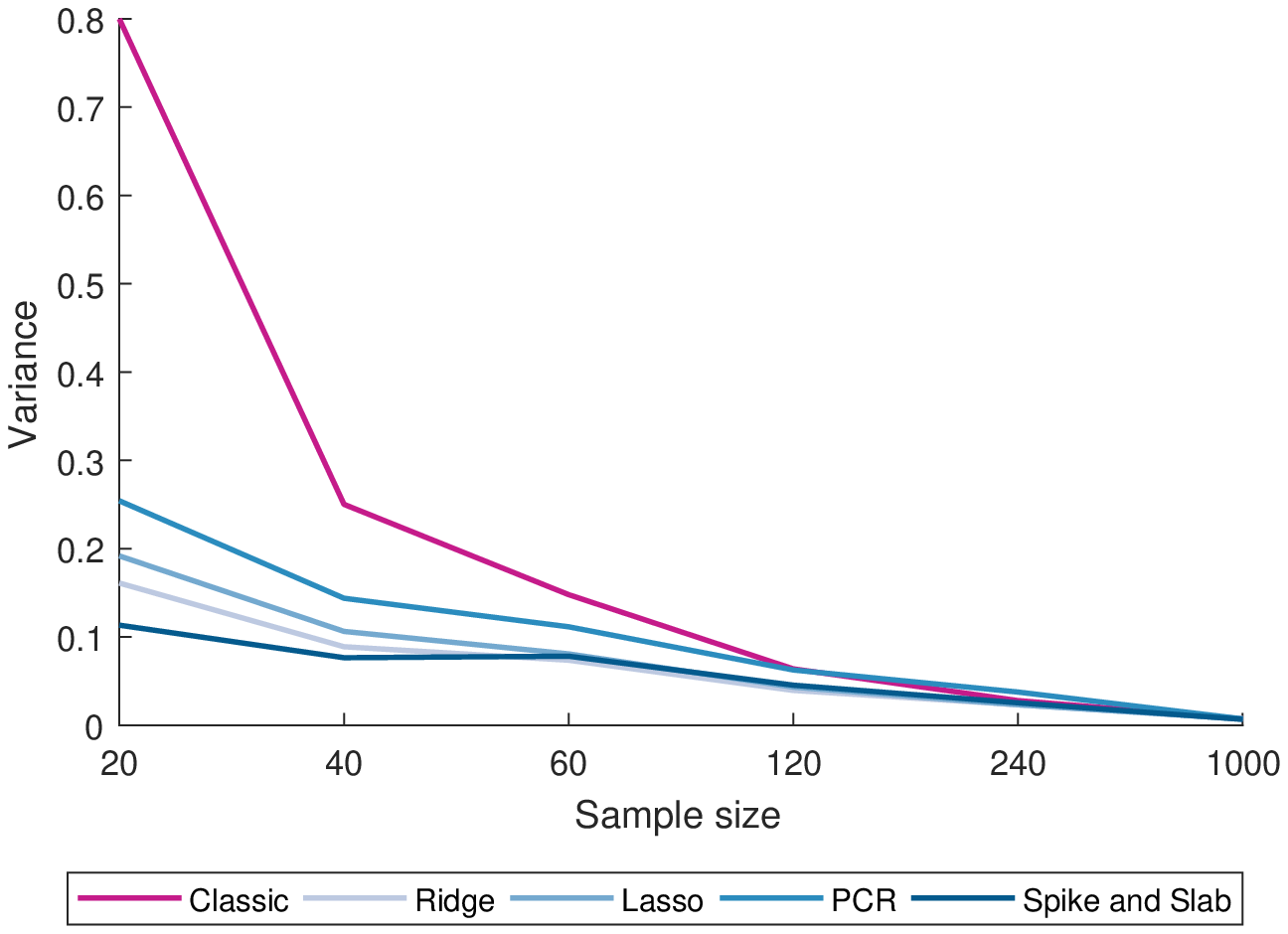}
\caption{Variance}
\label{fig_pf_sim_risk_var}
\end{subfigure}\\
\caption{\footnotesize \textbf{Decomposition of estimation risk.} Plot of average estimation risk (\ref{eq_pf_sim_risk}) for varying sample sizes for the case of $m=10$. Figure \ref{fig_pf_sim_risk_bias} and \ref{fig_pf_sim_risk_var} provide the bias-variance decomposition of the estimation risk.}
\label{fig_pf_sim_risk}
\end{minipage}
\end{figure}

\par Third, the ML strategies outperform the benchmark strategies in most cases. The equal weight strategy works well for short sample sizes and small portfolios, slightly outperforming some of the ML algorithms. However, the problem of equal weighting is unmasked when the number of observations increases because the data is increasingly containing information about the population moments, and thus also about the optimal weights. Similarly, the mean-variance portfolio without short selling can be very effective for small sample sizes, as the non-negativity restriction provides a lower bound for the weights. However, as the number of observations increases, the sample may contain precise information about negative optimal weights. In such cases, the non-negativity restriction could be harmful. The minimum variance portfolio imposes the constraint that means are irrelevant for portfolio choice. A large literature has documented that this strategy performs well in practice due to the difficulty in estimating means from the data. My simulation study suggests that it performs adequately for a moderate number of assets ($m=10$) and low sample sizes. However, as the number of observations increases, disregarding information about the means leads to relatively poor Sharpe ratios. In general, the minimum variance strategy will only work for non-degenerate cases because it relies on estimation of the covariance matrix. 

\par The three benchmark approaches discussed above have the same general problem of imposing ``hard'' constraints that are invariant to changes in sample size. As the number of observations increases, none of these strategies will approach the population Sharpe ratio. In contrast, the ML approaches are more flexible, imposing a large penalty in cases with severe estimation risk and a low or zero penalty when estimation risk is low.

\par The last benchmark, the Empirical Bayes, may be viewed as a ML strategy due to the data-driven selection of the penalty parameter. Indeed, the results suggest that the strategy is adapting to the traditional estimates as the number of observations increases. However, each weight is shrunk towards the minimum variance portfolio, which is degenerate when $m>n$.

\par Table \ref{tab_pf_sim_risk} reports that the estimation risk is significantly lower for all ML algorithms compared to the traditional approach. An intuitive explanation for this result is provided in Figure \ref{fig_pf_sim_risk}, where the estimation risk is decomposed into squared bias and variance for the case of $m=10$. Note that the traditional approach almost achieves zero bias even in small samples, but bears too much variance. The ML approaches reduce variance by accepting bias in the portfolio weights, in total leading to lower estimation risk.

\begin{landscape}
\begin{table}
\scriptsize
\centering
\begin{tabular}{p{6cm} rrrrrr|rrrrrr|rrrrrr}
\midrule
Assets ($m$) & \multicolumn{6}{c}{10} & \multicolumn{6}{c}{25} & \multicolumn{6}{c}{50}\\
Sample size	($n$) & 20 & 40 & 60 & 120 & 240 & 1000 & 20 & 40 & 60 & 120 & 240 & 1000 & 20 & 40 & 60 & 120 & 240 & 1000\\
\midrule
Population 		& 0.62 & 0.62 & 0.62 & 0.62 & 0.62 & 0.62
				& 1.05 & 1.05 & 1.05 & 1.05 & 1.05 & 1.05 
				& 2.05 & 2.05 & 2.05 & 2.05 & 2.05 & 2.05\\
MV			 	& 0.25 & 0.41 & 0.47 & 0.55 & 0.59 & 0.61
				& - & 0.46 & 0.67 & 0.86 & 0.96 & 1.02
				& - & - & 0.38 & 1.31 & 1.79 & 1.99\\ \\

\textbf{Machine Learning}\\
Ridge			& 0.39 & 0.47 & 0.50 & 0.55 & 0.58 & 0.61
				& 0.35 & 0.67 & 0.74 & 0.86 & 0.96 & 1.03
				& 0.51 & 0.60 & 0.86 & 1.50 & 1.78 & 1.99\\
Lasso			& 0.37 & 0.46 & 0.50 & 0.55 & 0.58 & 0.61
				& 0.46 & 0.63 & 0.73 & 0.87 & 0.95 & 1.03
				& 0.42 & 0.67 & 0.84 & 1.43 & 1.79 & 1.99\\
PCR				& 0.36 & 0.42 & 0.47 & 0.52 & 0.57 & 0.61
				& 0.43 & 0.59 & 0.65 & 0.81 & 0.95 & 1.03
				& 0.41 & 0.59 & 0.69 & 1.28 & 1.79 & 1.99\\
Spike \& Slab	& 0.39 & 0.44 & 0.46 & 0.50 & 0.56 & 0.60
				& 0.33 & 0.63 & 0.67 & 0.77 & 0.85 & 0.98
				& 0.39 & 0.34 & 0.75 & 0.86 & 1.30 & 1.95\\ \\

\textbf{Benchmarks}\\
Equal weights	& 0.40 & 0.40 & 0.40 & 0.40 & 0.40 & 0.40
				& 0.38 & 0.38 & 0.38 & 0.38 & 0.38 & 0.38
				& 0.33 & 0.33 & 0.33 & 0.33 & 0.33 & 0.33\\
MV-C			& 0.45 & 0.47 & 0.48 & 0.49 & 0.50 & 0.51
				& 0.55 & 0.59 & 0.63 & 0.67 & 0.68 & 0.69
				& 0.56 & 0.64 & 0.68 & 0.72 & 0.76 & 0.78\\
Min Variance	&  0.31 & 0.34 & 0.38 & 0.38 & 0.39 & 0.40
				& - & 0.31 & 0.38 & 0.44 & 0.48 & 0.50
				& - & - & 0.17 & 0.31 & 0.38 & 0.41\\
EB				& 0.27 & 0.40 & 0.46 & 0.54 & 0.59 & 0.61
				& - & 0.42 & 0.64 & 0.86 & 0.96 & 1.03
				& - & - & 0.34 & 1.31 & 1.79 & 1.99\\
\bottomrule
\end{tabular}
\caption{\footnotesize{\textbf{Sharpe ratios.} Average Sharpe ratios computed using (\ref{eq_pf_sim_sharpe}) based on $K=100$ repeated draws of the data $\X_k\sim\N(\boldsymbol{\mu},\boldsymbol{\Sigma})$ for $k=1,\hdots,K$ for varying sample sizes $n$ and number of assets $m$. The population values $\boldsymbol{\mu}$ and $\boldsymbol{\Sigma}$ are computed using a random draw of $m$ stocks from the S\&P500. MV denotes the traditional approach, MV-C the traditional approach without short-selling and EB the Empirical Bayes approach.}}
\label{tab_pf_sim_sharpe}
\end{table}
\begin{table}
\scriptsize
\centering
\begin{tabular}{p{6cm} rrrrrr|rrrrrr|rrrrrr}
\midrule
Assets ($m$) & \multicolumn{6}{c}{10} & \multicolumn{6}{c}{25} & \multicolumn{6}{c}{50}\\
Sample size	($n$) & 20 & 40 & 60 & 120 & 240 & 1000 & 20 & 40 & 60 & 120 & 240 & 1000 & 20 & 40 & 60 & 120 & 240 & 1000\\
\midrule
MV				& 0.81 & 0.25 & 0.15 & 0.06 & 0.02 & 0.01
				& -	   & 0.84 & 0.37 & 0.12 & 0.04 & 0.01 
				& -    & -    & 1.15 & 0.14 & 0.05 & 0.01 \\
Ridge			& 0.22 & 0.13 & 0.10 & 0.05 & 0.03 & 0.01
				& 1.02 & 0.25 & 0.19 & 0.11 & 0.05 & 0.01
				& 0.61 & 0.90 & 0.42 & 0.13 & 0.05 & 0.01 \\
Lasso			& 0.26 & 0.14 & 0.10 & 0.06 & 0.03 & 0.01 
				& 0.47 & 0.28 & 0.21 & 0.11 & 0.05 & 0.01
				& 0.73 & 0.58 & 0.44 & 0.13 & 0.05 & 0.01 \\
PCR				& 0.30 & 0.18 & 0.14 & 0.08 & 0.04 & 0.01
				& 0.49 & 0.36 & 0.30 & 0.16 & 0.06 & 0.01
				& 0.75 & 0.65 & 0.58 & 0.16 & 0.05 & 0.01 \\
Spike \& Slab	& 0.19 & 0.14 & 0.13 & 0.09 & 0.05 & 0.01
				& 2.17 & 0.27 & 0.23 & 0.16 & 0.11 & 0.04
				& 1.14 & 2.23 & 0.47 & 0.39 & 0.19 & 0.02 \\
\bottomrule
\end{tabular}
\caption{\footnotesize{\textbf{Estimation risk.} The table reports estimation risk based on (\ref{eq_pf_sim_risk}) using $K=100$ samples of the data $\X_k\sim\N(\boldsymbol{\mu},\boldsymbol{\Sigma})$ for $k=1,\hdots,K$, for varying number of sample sizes $n$ and number of assets $m$. The population values $\boldsymbol{\mu}$ and $\boldsymbol{\Sigma}$ are computed using a random draw of $m$ stocks from the S\&P500. MV denotes the traditional approach.}}
\label{tab_pf_sim_risk}
\end{table}
\end{landscape}

\section{Estimation Risk in Applications}\label{sec_pf_empirical}

\begin{table}
\scriptsize
\centering
\begin{tabular}{lrrrrrr}
\midrule
Dataset	 & Abbreviation & Period & $m$ & $T$ & $n$ & Source\\
\midrule
Standard \& Poor's  & S\&P-20 & Jan90-Oct17 & 20 & 334 & 120 & Kaggle.com \\
Standard \& Poor's  & S\&P-50 & Jan90-Oct17 & 50 & 334 & 120 & Kaggle.com \\
Standard \& Poor's  & S\&P-500 & Jan10-Oct17 & 500 & 94 & 60 & Kaggle.com \\
Industry Portfolio	& IND-30  & Jan90-Jan18 & 30   & 337 & 120 & Kenneth French\\
Industry Portfolio	& IND-49  & Jan90-Jan18 & 49   & 337 & 120 & Kenneth French\\
Cryptocurrency		& C-200  & Apr13-Dec17 & 200  & 57 & 10 & Kaggle.com \\
\bottomrule
\end{tabular}
\caption{\footnotesize{\textbf{Datasets.} Data used for assessment of estimation risk in the empirical studies. The S\&P data is also used for calibrating the simulation study in Section \ref{sec_pf_sim} and for the example in Section \ref{sec_pf_methods}.}}
\label{tab_pf_empirical_datasets}
\end{table}

\subsection{Data and Evaluation Strategy}\label{sec_pf_empirical_data}
\par I assess the ability of ML to reduce estimation risk by considering different real world datasets. The first dataset contains returns on companies from Standard \& Poor's 500 index, which is based on the market capitalisation of 500 of the largest American companies listed on the New York Stock Exchange (NYSE) and NASDAQ. I compute monthly excess returns for $T=334$ observations ranging from January 1990 to October 2017 and consider random draws of $m=20$ and $m=50$ assets. In addition, I consider the full set of $m=500$ assets for a shorter time period. The S\&P data provides an example of estimation risk in large portfolios representative for the US market.

\par Second, I consider two datasets where each asset is constructed by annually assigning each stock from NYSE, American Stock Exchange (AMEX) and NASDAQ to industries based on the Standard Industrial Classification (SIC) codes. The two datasets contain $m=30$ and $m=49$ industries, respectively, and I consider $T=337$ months ranging from January 1990 to January 2018 for each dataset. The data is obtained from the website of Kenneth French. I expect the estimation risk to be less severe in these data, as the aggregation of individual stocks leads to somewhat more stable returns over time.

\par Last, I consider returns on the $m=200$ largest cryptocurrencies by market value as of the end of 2017, observed in $T=57$ months between April 2013 and December 2017. Estimation risk is expected to be large in this dataset due to the relatively short lifetime and the large number of assets. Furthermore, strategies such as the traditional approach and the minimum variance portfolio are impossible to implement due to a degenerate covariance matrix. The data is noisy, and the number of currencies start out at 4 in the first months of 2013, increasing in a close to linear fashion throughout the period. I exclude large monthly returns above 500\% in the absolute sense. All datasets are summarised in Table \ref{tab_pf_empirical_datasets}. 

\par I use a ``rolling-sample'' approach to compare each strategy on a given dataset. In detail, starting from $t=1$, the $n$ first returns are used for estimation, and the first out of sample return at $t=n+1$ is used to compute the portfolio return. One step onwards, estimation is based on $t=2,\hdots,n+1$ and evaluation is conducted using $t=n+2$. Continuing this procedure for all time periods yields $T-n$ out of sample returns. The Sharpe ratio of these returns for strategy $q$ is
\begin{equation}\label{eq_pf_empirical_sharpe}
\hat{s}_q = \frac{\hat{\mu}_q}{\hat{\sigma}_q}
\end{equation}
where the mean $\hat{\mu}_q$ and standard deviation $\hat{\sigma}_q$ are computed based on the $T-n$ out of sample returns for strategy $q$. To test whether the estimated Sharpe ratios of two given strategies are statistically different, I use the approach by \cite{jobson1981performance} with the correction in \cite{memmel2003performance}.\footnote{Suppose the two strategies $q$ and $l$ have out of sample portfolio means $\hat{\mu}_q$, $\hat{\mu}_l$, standard deviations $\hat{\sigma}_q$, $\hat{\sigma}_l$ and covariance $\hat{\sigma}_{ql}$. Under the null hypothesis of equal Sharpe ratios, the test statistic is $\hat{z}_{ql} = \frac{\hat{\sigma}_l \hat{\mu}_q - \hat{\sigma}_q\hat{\mu}_l}{\sqrt{\psi}}$, where $\psi = \frac{1}{T-n}\left(2\hat{\sigma}_q^2\hat{\sigma}_l^2 - 2\hat{\sigma}_q\hat{\sigma}_l \hat{\sigma}_{ql} + \frac{1}{2} \hat{\mu}_q^2\hat{\sigma}_l^2 + \frac{1}{2}\hat{\mu}_l^2\hat{\sigma}_q^2 - \frac{\hat{\mu}_q\hat{\mu}_l}{\hat{\sigma}_q\hat{\sigma}_l}\hat{\sigma}_{ql}^2\right)$.
Based on normally distributed IID returns, the test statistic is asymptotically standard normal.} I use estimation windows $n$ of 120, 60 and 10 months depending on the data, see Table \ref{tab_pf_empirical_datasets}. For Ridge, Lasso and PCR, I use five-fold cross validation for all datasets expect the cryptocurrency data, where I use leave-one-out cross validation. For the Spike and Slab approach, I use the same prior specifications as outlined in the simulation study in Section \ref{sec_pf_sim}.

\subsection{Results}\label{sec_pf_empirical_results}
The Sharpe ratio of each strategy and each dataset is reported in Table \ref{tab_pf_empirical_sharpe}. The results from the S\&P data based on random selections of $m=20$ and $m=50$ assets are reported in the second and third column, respectively. The low Sharpe ratio obtained by the traditional mean-variance approach (``MV'') suggests that the theoretical gains from diversification are eroded by estimation risk. Some details on this finding is provided in Figure \ref{fig_pf_empirical_lasso_mean}, where the out of sample mean of the traditional approach is plotted for each evaluation month. Large changes in asset returns towards the end of the sample causes both the mean and covariance structures to change, giving large asset positions and consequently highly volatile out of sample returns. On the other hand, the cross-validation procedure used by Lasso results in a high penalty in this period, setting all portfolio weights to zero. In other words, the presence of large estimation risk makes the Lasso strategy refrain from investing in risky assets. Similar (although not sparse) results hold for both Ridge and PCR. The result is relatively high Sharpe ratios for the ML strategies in the second and third column of Table \ref{tab_pf_empirical_sharpe}. Furthermore, the equal weight strategy yields negative returns due to a general declining market in the period under study, and no short selling (``MV-C'') or ignoring means (``Min Variance'') does not help in this case. The ML strategies statistically outperform all other strategies in terms of Sharpe ratios, see Table \ref{tab_pf_empirical_jk} in Appendix \ref{sec_pf_appendix_results}.

\begin{table}
\scriptsize
\centering
\begin{tabular}{lrrrrrr}
\midrule
Strategy		& S\&P-20 & S\&P-50 & IND-30 & IND-49 & C-200 & S\&P-500\\
\midrule
MV 				& -0.1 & -0.03 & 0.06 & -0.01 & - & -\\ \\
\textbf{Machine Learning}\\
Ridge			& 0.25 & 0.24 & 0.15 & 0.13 & 0.31 & 0.27\\
Lasso			& 0.24 & 0.23 & 0.08 & 0.10 & 0.32 & 0.20\\
PCR				& 0.21 & 0.24 & 0.16 & 0.08 & 0.22 & 0.19\\ 
Spike \& Slab 	& 0.27 & 0.16 & 0.11 & 0.10 & -0.06 & 0.23\\ \\
\textbf{Benchmarks}\\
Min Variance	& -0.30 & -0.30 & 0.15 & 0.08 & - & -\\
MV-C 			& -0.29 & -0.29 & 0.21 & 0.21 & 0.15 & -0.12\\
Equal weights   & -0.16 & -0.17 & 0.18 & 0.19 & 0.34 & 0.12\\
\bottomrule
\end{tabular}
\caption{\footnotesize{\textbf{Portfolio Sharpe ratio for the empirical data.} The average out of sample Sharpe ratio computed using formula (\ref{eq_pf_empirical_sharpe}) for each strategy and each dataset described in Table \ref{tab_pf_empirical_datasets}. The estimation details are discussed in Section \ref{sec_pf_empirical_data}.}}
\label{tab_pf_empirical_sharpe}
\end{table} 

\par The Sharpe ratios for the industry portfolios are reported in the fourth and fifth columns of Table \ref{tab_pf_empirical_sharpe}. The raw data indicate positive returns for all industries measured across the entire sample. In large, ML outperforms the traditional approach, but the difference is insignificant. Moreover, the equal weight portfolio and the mean-variance portfolio without short selling provide higher Sharpe ratios than the ML approaches, but the difference is not significant. The similar performance across methods could be due to the fact that each asset (industry) is a combination of stocks, to some extent reducing estimation risk compared to portfolios of individual stocks.

\par The results from the cryptocurrency data are reported in column six. In total, 200 assets are considered throughout the analysis, but no more than 60 assets are present during a specific estimation window. Still, estimation is challenging in this case, as these portfolios are chosen based on an estimation period of only 10 months. The main result is that the ML approaches are able to obtain similar Sharpe ratios to the equal weight strategy, but the estimated portfolio weights using Lasso only show a few non-zero assets. Figure \ref{fig_pf_empirical_lasso_nonzero} plots the number of assets in the equal weight strategy together with the number of non-zero assets chosen by Lasso. Broadly speaking, less than 10 assets are used to form the portfolio at each time period in the case of Lasso.

\par The results for the large S\&P portfolio is reported in the last column of Table \ref{tab_pf_empirical_sharpe}. Again, this is a highly challenging estimation problem where the traditional approach and the minmium variance portfolio are infeasible, since the estimation sample size is 60 months and the number of assets is 500. In this high dimensional case, the ML methods outperform, although not signficantly, the equal weight portfolio in terms of Sharpe ratio. This is in contrast to the results found in e.g. \cite{demiguel2007optimal}, where it is argued that the equal weight strategy is superior to the other approaches discussed in the literature. Further results on the average portfolio mean and standard deviations are reported in Table \ref{tab_pf_empirical_mean} and \ref{tab_pf_empirical_std} in Appendix \ref{sec_pf_appendix_results}, respectively.

\begin{figure}
\centering
\begin{minipage}{1\textwidth}
\begin{subfigure}{0.50\textwidth}
\includegraphics[width=\linewidth]{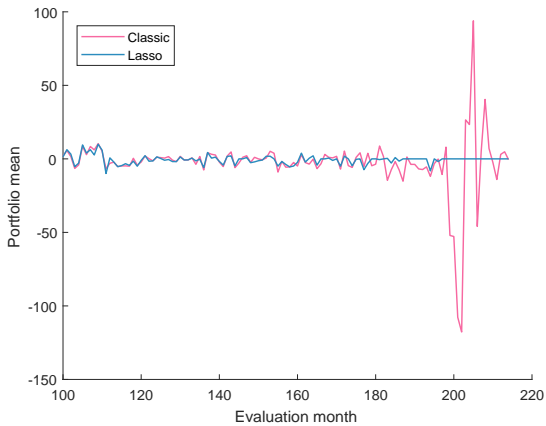}
\caption{Out of sample portfolio mean}
\label{fig_pf_empirical_lasso_mean}
\end{subfigure}\hspace*{\fill}
\begin{subfigure}{0.50\textwidth}
\includegraphics[width=\linewidth]{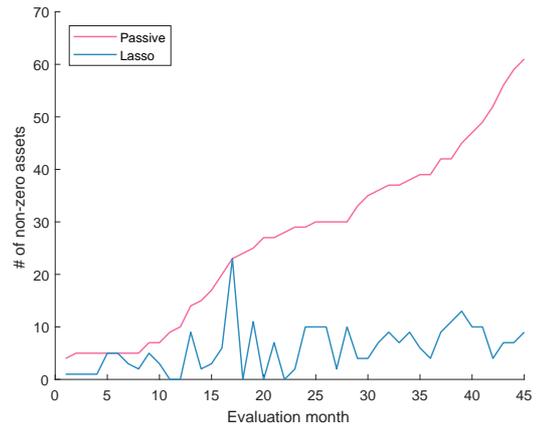}
\caption{Number of non-zero assets}
\label{fig_pf_empirical_lasso_nonzero}
\end{subfigure}\\
\caption{\footnotesize \textbf{Sparse portfolios and estimation risk.} Figure \ref{fig_pf_empirical_lasso_mean} shows the out of sample return from the traditional approach and Lasso based on the S\&P data with $m=50$ assets. The traditional approach yields highly unstable returns towards the end of the sample. In contrast, Lasso chooses a zero investment policy in this period. Figure \ref{fig_pf_empirical_lasso_nonzero} plots the number of non-zero assets from the equal weight passive strategy (all assets) and Lasso based on the cryptocurrency data.}
\label{fig_pf_empirical_lasso}
\end{minipage}
\end{figure}

\section{Conclusion}\label{sec_pf_conclusion}
I provide a unified ML framework for estimation of optimal portfolio weights, where the weights are obtained by ``regressing'' excess asset returns on a constant. The ML framework should be considered by researchers and practitioners for several reasons.

\par First, the framework implies that any ``off-the-shelf'' linear ML method can be used to estimate the optimal portfolio weights. Portfolio optimization via ML possibly simplifies implementation, as such learning algorithms are available in standard statistical software. Consequently, cross-validation and estimation may be done using well documented and standardised software, with several options for diagnostic checks. 

\par Second, the framework can be used to shed new light on the traditional approach and recently proposed shrinkage methods. Since the traditional approach is equivalent to linear regression, the large estimation risk incurred by this method can be interpreted as overfitting in a regression sense. The ML framework can be used to derive a link between the traditional approach and the regularization methods proposed by \cite{demiguel2009generalized}. In particular, I showed how Ridge and Lasso weight estimates relate to the traditional weight estimates, and provided a condition for when Ridge regression may be expected to outperform the traditional approach in terms of estimation risk.

\par Third, I introduce two new methods for portfolio estimation: Principal Component regression and Spike and Slab regression. In essence, I document that these methods perform similarly to Ridge and Lasso regressions based on both simulated and empirical data.

\par Finally, the ML framework provided in this paper offers one promising direction for future research. Specifically, Proposition \ref{prop_pf_framework_generror} can be extended to non-linear models, thus suggesting portfolio estimation based on sophisticated ML algorithms such as Regression trees or Random Forests. However, backing out portfolio weight estimates from such methods is not straightforward. 

\clearpage
\bibliographystyle{aea}
\bibliography{referencelist}

\clearpage
\begin{appendices}

\renewcommand{\thesection}{\Alph{section}}
\numberwithin{equation}{section}

\section{Derivations}\label{sec_pf_appendix_derivations}
\begin{proof}[Proof of Equation (\ref{eq_pf_framework_bv})]
Suppose that $\x_0$ is non-random and $y_0$ is drawn according to $y_0=f(\x_0)+\varepsilon_0$. Using (\ref{eq_pf_framework_mse}) we get
\begin{eqnarray}
F_q &=& \E_{\mathcal{T}}\left\{\E_{y_0}[(y_0-\hat{f}_q(\x_0))^2]\right\}\nonumber\\
&=& \E_{\mathcal{T}}\left\{\E_{y_0}[y_0^2 + \hat{f}_q(\x_0)^2 - 2y_0 \hat{f}_q(\x_0)]\right\}\nonumber\\
&=&\E_{\mathcal{T}}\left\{f(\x_0)^2 + \phi^2 + \hat{f}_q(\x_0)^2 - 2f(\x_0) \hat{f}_q(\x_0)]\right\}\nonumber\\
&=&(f(\x_0)-\E_\mathcal{T}[\hat{f}_q(\x_0)])^2 + \V_\mathcal{T}[\hat{f}_q(\x_0)] + \phi^2
\end{eqnarray}
The last equality follows by using $\E_\mathcal{T}[\hat{f}_q(\x_0)^2] = \V_\mathcal{T}[\hat{f}_q(\x_0)] + \E_{\mathcal{T}}[\hat{f}_q(\x_0)]^2$.
\end{proof}

\begin{proof}[Proof of Equation (\ref{eq_pf_framework_eu})] Let $\tilde{\x}$ be the return on the risky assets. The portfolio return is given by $r_f\theta_f + \tilde{\x}'\boldsymbol{\theta}$, and the optimization problem of the agent is
\begin{equation}
\max_{\theta_f,\boldsymbol{\theta}} \left\{ \E[u(r_f\theta_f + \tilde{\x}'\boldsymbol{\theta})]\right\} \text{ s.t. } \theta_f + \mathbf{1}'\boldsymbol{\theta} = 1
\end{equation}
where $\mathbf{1}$ is a $m\times 1$ vector of ones. Substituting in the constraint gives the optimization problem $\max_{\boldsymbol{\theta}} \left\{ \E[u(r_f + \x'\boldsymbol{\theta})]\right\}$, where excess return is given by $\x = \tilde{\x}-\mathbf{1}r_f$. 
\end{proof}

\begin{proof}[Proof of Proposition \ref{prop_pf_framework_generror}] Apply the following monotonic transformation to (\ref{eq_pf_framework_u})
\begin{equation}
\tilde{u}(r) = 2\frac{1}{\alpha}u(r) - \left(\frac{1}{\alpha}\right)^2 = -\left(\frac{1}{\alpha}-r\right)^2
\end{equation}
Plug in the return $r=r_f+\x'\boldsymbol{\theta}$ to get
\begin{equation}
\tilde{u}(r_f+\x'\boldsymbol{\theta}) = -\left(\frac{1-\alpha r_f}{\alpha}-\x'\boldsymbol{\theta}\right)^2 = -(\bar{r}-\x'\boldsymbol{\theta})^2
\end{equation}
where $\bar{r} = (1-\alpha r_f)/\alpha$. By taking the expectation with respect to $\x$ the generalisation error is defined as
\begin{equation}
F(\boldsymbol{\theta}) = -\E_\x[\tilde{u}(r_f+\x'\boldsymbol{\theta})] = \E_\x[(\bar{r}-\x'\boldsymbol{\theta})^2]
\end{equation} 
Since $\tilde{u}$ is a monotonic transformation of $u$, it directly follows that $\argmax_{\boldsymbol{\theta}}\E_{\x}[u(r_f+ \x'\boldsymbol{\theta})]=\argmin_{\boldsymbol{\theta}} F(\boldsymbol{\theta})$. Thus, the solution $\boldsymbol{\theta}^*$ that maximizes expected quadratic utility is equivalent to the solution that minimizes the generalisation error.
\end{proof}

\begin{proof}[Proof of Equation (\ref{eq_pf_framework_thetaPop})] Writing out the generalisation error (\ref{eq_pf_framework_generror}) using the known distribution of $\x$
\begin{eqnarray}\label{eq_pf_appendix_generror}
F(\boldsymbol{\theta}) &=& (\bar{r}-\boldsymbol{\mu}'\boldsymbol{\theta})^2 + \boldsymbol{\theta}'\boldsymbol{\Sigma}\boldsymbol{\theta}\nonumber\\
&=& \bar{r}^2-2\boldsymbol{\mu}'\boldsymbol{\theta}\bar{r}+\boldsymbol{\theta}'\boldsymbol{\mu}\boldsymbol{\mu}'\boldsymbol{\theta}+\boldsymbol{\theta}'\boldsymbol{\Sigma}\boldsymbol{\theta}\nonumber\\
&=&\bar{r}^2 - 2\boldsymbol{\mu}'\boldsymbol{\theta}\bar{r} + \boldsymbol{\theta}'(\boldsymbol{\Sigma}+\boldsymbol{\mu}\boldsymbol{\mu}')\boldsymbol{\theta}
\end{eqnarray}
Taking the derivative with respect to $\boldsymbol{\theta}$ gives the first order condition
\begin{equation}
-2\boldsymbol{\mu}\bar{r}+2(\boldsymbol{\Sigma}+\boldsymbol{\mu}\boldsymbol{\mu}')\boldsymbol{\theta}=0
\end{equation}
solving for $\boldsymbol{\theta}$ yields
\begin{equation}
\boldsymbol{\theta}^* = (\boldsymbol{\Sigma} + \boldsymbol{\mu}\boldsymbol{\mu}')^{-1} \boldsymbol{\mu} \bar{r}
\end{equation}
which is Equation (\ref{eq_pf_framework_thetaPop}). Alternatively, the same formula can be obtained by maximizing expected quadratic utility $\E_{\x}[u(r_f + \x'\boldsymbol{\theta})] = r_f + \boldsymbol{\mu}'\boldsymbol{\theta} - \frac{1}{2}\alpha \left(\boldsymbol{\theta}'\boldsymbol{\Sigma}\boldsymbol{\theta}+(r_f + \boldsymbol{\mu}'\boldsymbol{\theta})^2\right)$ from (\ref{eq_pf_framework_eu}) directly.
\end{proof}

\begin{proof}[Proof of Proposition \ref{prop_pf_framework_estrisk}] Using (\ref{eq_pf_appendix_generror}), the minimum generalisation error is
\begin{equation}
F_* = F(\boldsymbol{\theta}^*)=\bar{r}^2 - 2\boldsymbol{\theta}^{*'}\boldsymbol{\mu}\bar{r} + \boldsymbol{\theta}^{*'}\mathbf{A}\boldsymbol{\theta}^*
\end{equation}
By using (\ref{eq_pf_framework_thetaPop}) we may substitute in for $\boldsymbol{\mu} \bar{r}=\mathbf{A}\boldsymbol{\theta}^*$, giving
\begin{equation}
F_* = \bar{r}^2 - 2\boldsymbol{\theta}^{*'}\mathbf{A}\boldsymbol{\theta}^* + \boldsymbol{\theta}^{*'}\mathbf{A}\boldsymbol{\theta}^* = \bar{r}^2-\boldsymbol{\theta}^{*'}\mathbf{A}\boldsymbol{\theta}^*
\end{equation}
Similarly, the expected generalisation error of algorithm $q$ can be written using the distribution of the out of sample returns $\x_0\sim\N(\boldsymbol{\mu},\boldsymbol{\Sigma})$ as
\begin{eqnarray}
F_q&=&\E_\mathcal{T}\{\E_{\x_0}[(\bar{r}-\x_0'\hat{\boldsymbol{\theta}}_q)^2]\}\nonumber\\
&=&\E_\mathcal{T}\{\bar{r}^2 - 2\hat{\boldsymbol{\theta}}_q'\boldsymbol{\mu}\bar{r} + \hat{\boldsymbol{\theta}}_q'\mathbf{A}\boldsymbol\hat{\boldsymbol{\theta}}_q\}
\end{eqnarray}
Substituting in for $\boldsymbol{\mu}\bar{r} = \mathbf{A}\boldsymbol{\theta}^*$ now gives
\begin{equation}
F_q = \E_\mathcal{T}\{\bar{r}^2 - 2\hat{\boldsymbol{\theta}}_q' \mathbf{A}\boldsymbol{\theta}^* + \hat{\boldsymbol{\theta}}_q'\mathbf{A}\boldsymbol\hat{\boldsymbol{\theta}}_q\}
\end{equation}
By adding and subtracting $\boldsymbol{\theta}^{*'}\mathbf{A}\boldsymbol{\theta}^*$ we get
\begin{eqnarray}
F_q &=& \bar{r}^2-\boldsymbol{\theta}^{*'}\mathbf{A}\boldsymbol{\theta}^* + \E_\mathcal{T}\{(\boldsymbol{\theta}^*-\hat{\boldsymbol{\theta}}_q)'\mathbf{A}(\boldsymbol{\theta}^*-\hat{\boldsymbol{\theta}}_q)\}\nonumber\\
&=& F_* + \E_\mathcal{T}\{(\boldsymbol{\theta}^*-\hat{\boldsymbol{\theta}}_q)'\mathbf{A}(\boldsymbol{\theta}^*-\hat{\boldsymbol{\theta}}_q)\}
\end{eqnarray}
It thus follows that we may define estimation risk as
\begin{equation}
R_q = F_q-F_* = \E_\mathcal{T}\{(\boldsymbol{\theta}^*-\hat{\boldsymbol{\theta}}_q)'\mathbf{A}(\boldsymbol{\theta}^*-\hat{\boldsymbol{\theta}}_q)\}
\end{equation}
The expectation of the quadratic form is given by the rule $\E[\x'\mathbf{A}\mathbf{x}] =\E[\x]'\mathbf{A}\E[\x]+ \text{tr}(\mathbf{A}\V[\x])$. In our case, define $\x=\boldsymbol{\theta}^*-\hat{\boldsymbol{\theta}}_q$ giving $\E_\mathcal{T}[\x]=\boldsymbol{\theta}^*-\E_\mathcal{T}[\hat{\boldsymbol{\theta}}_q]$ and $\V_\mathcal{T}[\x]=\E_\mathcal{T}[(\boldsymbol{\theta}^*-\hat{\boldsymbol{\theta}}_q-(\boldsymbol{\theta}^*-\E_\mathcal{T}[\hat{\boldsymbol{\theta}}_q]))(\boldsymbol{\theta}^*-\hat{\boldsymbol{\theta}}_q-(\boldsymbol{\theta}^*-\E_\mathcal{T}[\hat{\boldsymbol{\theta}}_q]))']=\E_\mathcal{T}[(\hat{\boldsymbol{\theta}}_q-\E_\mathcal{T}[\hat{\boldsymbol{\theta}}_q])(\hat{\boldsymbol{\theta}}_q-\E_\mathcal{T}[\hat{\boldsymbol{\theta}}_q])']=\V_\mathcal{T}[\hat{\boldsymbol{\theta}}_q]$. Using the expectation rule we thus get
\begin{equation}
R_q = (\boldsymbol{\theta}^*-\E_\mathcal{T}[\hat{\boldsymbol{\theta}}_q])'\mathbf{A}(\boldsymbol{\theta}^*-\E_\mathcal{T}[\hat{\boldsymbol{\theta}}_q]) + \text{tr}\left(\mathbf{A}\V_\mathcal{T}[\hat{\boldsymbol{\theta}}_q]\right) 
\end{equation}
which completes the proof.
\end{proof}

\begin{proof}[Proof of Equation (\ref{eq_pf_framework_tangencyPop})] Similar to \cite{britten1999sampling}, we may expand (\ref{eq_pf_framework_thetaPop}) as
\begin{equation}
\boldsymbol{\theta}^* = (\boldsymbol{\Sigma}+\boldsymbol{\mu}\boldsymbol{\mu}')^{-1} \boldsymbol{\mu}\bar{r} = \left(\boldsymbol{\Sigma}^{-1} - \frac{\boldsymbol{\Sigma}^{-1}\boldsymbol{\mu}\boldsymbol{\mu}'\boldsymbol{\Sigma}^{-1}}{1+\boldsymbol{\mu}'\boldsymbol{\Sigma}^{-1}\boldsymbol{\mu}}\right)\boldsymbol{\mu}\bar{r} = \frac{\bar{r}}{1+\boldsymbol{\mu}'\boldsymbol{\Sigma}^{-1} \boldsymbol{\mu}} \boldsymbol{\Sigma}^{-1}\boldsymbol{\mu}
\end{equation} 
Computing the relative weights gives the tangency portfolio $\boldsymbol{\omega}^* = \frac{\boldsymbol{\theta}^*}{\mathbf{1}'\boldsymbol{\theta}^*} = \frac{\boldsymbol{\Sigma}^{-1}\boldsymbol{\mu}}{\mathbf{1}'\boldsymbol{\Sigma}^{-1}\boldsymbol{\mu}}$.
\end{proof}

\begin{proof}[Proof of Proposition \ref{prop_pf_framework_Ridge}] It is useful to first derive some properties of the regression formulation of the optimal portfolio weights $\bar{r} = \X\boldsymbol{\theta}^* + \varepsilon$, where $\varepsilon$ has mean $\E[\varepsilon]=\bar{r}-\boldsymbol{\mu}'\boldsymbol{\theta}^*$ and variance $\V[\varepsilon] = \boldsymbol{\theta}^{*'}\boldsymbol{\Sigma}\boldsymbol{\theta}^*$. It follows that
\begin{equation}
\phi^2=\E[\varepsilon^2] = \E[\varepsilon]^2+\V[\varepsilon] = (\bar{r}-\boldsymbol{\mu}'\boldsymbol{\theta}^*)^2 + \boldsymbol{\theta}^{*'}\boldsymbol{\Sigma}\boldsymbol{\theta}^*=F_*
\end{equation}
Estimation risk and the corresponding second order moments are
\begin{equation}
R_q = \E_\mathcal{T}\{(\boldsymbol{\theta}^*-\hat{\boldsymbol{\theta}}_q)'\mathbf{A}(\boldsymbol{\theta}^*-\hat{\boldsymbol{\theta}}_q)\} \text{ and } \mathbf{M}_q = \E_\mathcal{T}[(\hat{\boldsymbol{\theta}}_q-\boldsymbol{\theta}^*)(\hat{\boldsymbol{\theta}}_q-\boldsymbol{\theta}^*)']
\end{equation}
Using the results from \cite{theobald1974generalizations}, it follows that if $\mathbf{A}$ is positive semidefinite, $\mathbf{A}\succeq 0$, then $\mathbf{M}_1-\mathbf{M}_2\succeq 0$ if and only if $R_1-R_2\geq 0$ for $q=1,2$. Thus, we may focus on the second order matrix $\mathbf{M}$ in the analysis below. We may write it as
\begin{equation}
\mathbf{M}_q =\V[\hat{\boldsymbol{\theta}}_q] + \E[\hat{\boldsymbol{\theta}}_q-\boldsymbol{\theta}^*]\E[\hat{\boldsymbol{\theta}}_q-\boldsymbol{\theta}^*]'
\end{equation}
When omitting the subscript $q$ I will refer to the traditional approach and when using the subscript ``R'' I refer to Ridge regression. For the traditional (OLS) approach bias is zero, $\E[\hat{\boldsymbol{\theta}}]=\boldsymbol{\theta}^*$, and the variance is $\V[\hat{\boldsymbol{\theta}}] = \phi^2 (\X'\X)^{-1}$, giving $\mathbf{M} =\phi^2(\X'\X)^{-1}$.

\par For Ridge regression, the estimator can be written in terms of the traditional approach as $\hat{\boldsymbol{\theta}}_\text{R} = \mathbf{W}\hat{\boldsymbol{\theta}}$, where $\W=(\I + \lambda(\X'\X)^{-1})^{-1}$. Ridge regression is therefore biased, $\E[\hat{\boldsymbol{\theta}}_\text{R}] = \mathbf{W}\boldsymbol{\theta}^*$, with corresponding variance $\V[\hat{\boldsymbol{\theta}}_\text{R}] = \phi^2\W(\X'\X)^{-1}\W'$, giving
\begin{equation}
\mathbf{M}_\text{R} =\phi^2\W(\X'\X)^{-1}\W' + (\mathbf{W}\boldsymbol{\theta}^*-\boldsymbol{\theta}^*)(\mathbf{W}\boldsymbol{\theta}^*-\boldsymbol{\theta}^*)'
\end{equation}
 From these well known results it follows that
\begin{equation}
\mathbf{M}-\mathbf{M}_\text{R}= \lambda\mathbf{B}^{-1}\left\{2\phi^2\I + \lambda\phi^2(\X'\X)^{-1} - \lambda \boldsymbol{\theta}^*\boldsymbol{\theta}^{*'}\right\}\mathbf{B}^{-1}
\end{equation}
where $\mathbf{B}=\X'\X+\lambda\I$. \cite{theobald1974generalizations} has shown that this expression is positive definite for $\lambda>0$ if $\lambda<2\phi^2/\boldsymbol{\theta}^{*'}\boldsymbol{\theta}^*$ where $\phi^2=F_*$.
\end{proof}

\begin{proof}[Proof of Equation (\ref{eq_pf_methods_existing_empbayes})]
Using Zellner's g-prior $(\mathbf{V}_{\boldsymbol{\eta}}^0)^{-1}=\frac{g}{n}\X_{\boldsymbol{\eta}}'\X_{\boldsymbol{\eta}}$ the posterior variance is
\begin{equation}
\mathbf{V}_{\boldsymbol{\eta}}^1 = (\X_{\boldsymbol{\eta}}'\X_{\boldsymbol{\eta}} + \frac{g}{n}\X_{\boldsymbol{\eta}}'\X_{\boldsymbol{\eta}})^{-1} = \frac{n}{n+g}(\X_{\boldsymbol{\eta}}'\X_{\boldsymbol{\eta}})^{-1}
\end{equation}
The posterior mean follows by using $\X_{\boldsymbol{\eta}}'\y = (\X_{\boldsymbol{\eta}}'\X_{\boldsymbol{\eta}})\hat{\boldsymbol{\theta}}_{\boldsymbol{\eta}}$ as
\begin{equation}
\boldsymbol{\theta}_{\boldsymbol{\eta}}^1 = \frac{n}{n+g}(\X_{\boldsymbol{\eta}}'\X_{\boldsymbol{\eta}})^{-1}((\X_{\boldsymbol{\eta}}'\X_{\boldsymbol{\eta}})\hat{\boldsymbol{\theta}}_{\boldsymbol{\eta}} + \frac{g}{n}(\X_{\boldsymbol{\eta}}'\X_{\boldsymbol{\eta}})\boldsymbol{\theta}_{\boldsymbol{\eta}}^0) = \frac{n}{n+g}\hat{\boldsymbol{\theta}}_{\boldsymbol{\eta}} + \frac{g}{n+g}\boldsymbol{\theta}_{\boldsymbol{\eta}}^0
\end{equation}
\end{proof}

\clearpage
\section{Additional Tables}\label{sec_pf_appendix_results}

\begin{table}[h]
\scriptsize
\centering
\begin{tabular}{lrrrrrrr}
\midrule
Strategy			& MV & Ridge & Lasso & PCR & Spike \& Slab & Min Variance & MV-C\\
\midrule
Ridge			& 0.26***\\
Lasso			& 0.25*** & -0.01\\
PCR				& 0.22** & -0.04 & -0.03\\
Spike \& Slab	& 0.28*** & 0.02 & 0.03 & 0.06\\
Min Variance 	& -0.29** & -0.55*** & -0.54*** & -0.51*** & 0.57***\\
MV-C			& -0.28** & -0.54*** & -0.53*** & -0.50*** & -0.56*** & 0.01\\
Equal weights		& -0.14 & -0.40*** & -0.40*** & -0.36*** & -0.42*** & 0.14* & 0.14**\\
\bottomrule
\end{tabular}
\caption{\footnotesize{\textbf{Jobson and Korkie pairwise test for equal Sharpe ratios.} The table shows that all ML methods (Ridge, Lasso, PCR, Spike \& Slab) significantly outperform the traditional approach (MV) and the benchmark strategies (Min Variance, MV-C and Equal weights) for the case of the S\&P-20 data. The table reports the difference in Sharpe ratios (row strategy $q$ minus column strategy $l$) and the significance level based on the test statistic $\hat{z}_{ql}$ developed by \cite{jobson1981performance}. Stars ***, ** and * indicate significance at the 1\%, 5\% and 10\% levels, respectively.}}
\label{tab_pf_empirical_jk}
\end{table}

\begin{table}[h]
\scriptsize
\centering
\begin{tabular}{lrrrrrr}
\midrule
Strategy		& S\&P-20 & S\&P-50 & IND-30 & IND-49 & C-200 & S\&P-500\\
\midrule
MV 				& -0.20 & -0.46 & 0.63 & -0.07 & - & - \\ \\
\textbf{Machine Learning}\\
Ridge			& 1.03 & 1.04 & 0.76 & 0.62 & 38.18 & 0.73\\
Lasso			& 0.93 & 1.00 & 0.54 & 0.82 & 16.40 & 1.00\\
PCR				& 0.92 & 1.08 & 0.92 & 0.48 & 11.91 & 0.44\\ 
Spike \& Slab 	& 1.01 & 0.68 & 0.45 & 0.40 & -1.55 & 1.05\\ \\
\textbf{Benchmarks}\\
Min Variance	& -1.08 & -1.14 & 0.51 & 0.30 & - & -\\
MV-C 			& -1.02 & -0.98 & 0.69 & 0.68 & 4.05 & -1.36\\
Equal weights   & -0.79 & -0.77 & 0.84 & 0.88 & 12.95 & 0.31\\
\bottomrule
\end{tabular}
\caption{\footnotesize{\textbf{Portfolio means for the empirical data.} The average out of sample portfolio mean, the numerator of formula (\ref{eq_pf_empirical_sharpe}), for each strategy and each dataset described in Table \ref{tab_pf_empirical_datasets}. The estimation details are discussed in Section \ref{sec_pf_empirical_data}.}}
\label{tab_pf_empirical_mean}
\end{table}

\begin{table}[h]
\scriptsize
\centering
\begin{tabular}{lrrrrrr}
\midrule
Strategy		& S\&P-20 & S\&P-50 & IND-30 & IND-49 & C-200 & S\&P-500\\
\midrule
MV 				& 15.25 & 17.05 & 10.69 & 10.12 & - & - \\ \\
\textbf{Machine Learning}\\
Ridge			& 4.15 & 4.41 & 5.03 & 4.71 & 121.64 & 2.78\\
Lasso			& 3.88 & 4.35 & 7.26 & 7.88 & 50.55 & 4.93\\
PCR				& 4.47 & 4.49 & 5.95 & 6.36 & 55.28 & 2.28\\ 
Spike \& Slab 	& 3.75 & 4.33 & 3.95 & 4.00 & 26.90 & 4.65\\ \\
\textbf{Benchmarks}\\
Min Variance	& 3.61 & 3.81 & 3.30 & 3.72 & - & -\\
MV-C 			& 3.50 & 3.42 & 3.28 & 3.33 & 26.50 & 11.39\\
Equal weights  & 5.11 & 4.51 & 4.59 & 4.57 & 37.93 & 2.56\\
\bottomrule
\end{tabular}
\caption{\footnotesize{\textbf{Portfolio standard deviation for the empirical data.} The average out of sample portfolio standard deviation, the denominator of formula (\ref{eq_pf_empirical_sharpe}), for each strategy and each dataset described in Table \ref{tab_pf_empirical_datasets}. The estimation details are discussed in Section \ref{sec_pf_empirical_data}.}}
\label{tab_pf_empirical_std}
\end{table}


\end{appendices}

\end{document}